\newcommand{\mR}{\mathbb{R}}
\newcommand{\Pn}{{P}_n}
\newcommand{\bX}{\bm{X}}
\newcommand{\bI}{\bm{I}}
\newcommand{\bA}{\bm{A}}
\newcommand{\bY}{\bm{Y}}
\newcommand{\bxi}{\bm{\xi}}
\newcommand{\bbeta}{\bm{\beta}}
\newcommand{\bzeta}{\bm{\zeta}}
\newcommand{\bx}{\bm{x}}
\newcommand{\bW}{\bm{W}}
\newcommand{\bM}{\bm{M}}
\newcommand{\bTheta}{\bm{\Theta}}
\newcommand{\bSigma}{\bm{\Sigma}}
\newcommand{\E}{{E}}
\newcommand{\lambdamax}{\lambda_{\mathrm{max}}}
\newcommand{\lambdamin}{\lambda_{\mathrm{min}}}
\newcommand{\ba}{\bm{a}}
\newcommand{\bb}{\bm{b}}
\newcommand{\bZ}{\bm{Z}}
\newcommand{\bF}{\bm{F}}
\newcommand{\OP}{\mathcal{O}_{P}}
\newcommand{\oP}{o_{P}}
\newcommand{\balpha}{\bm{\alpha}}
\newcommand{\bDelta}{\bm{\Delta}}
\newcommand{\hTheta}{\widehat{\Theta}}
\newcommand{\Thetabeta}{\bm{\Theta}_{\bxi^0}}
\newtheorem{assumption}{Assumption}
\newtheorem{theorem}{Theorem}
\newtheorem{lemma}[theorem]{Lemma}
\newtheorem{remark}{Remark}
\begin{document}

\title{\bf De-biased Lasso for Generalized Linear Models with A Diverging Number of Covariates}

\author{Lu Xia\textsuperscript{1}, Bin Nan\textsuperscript{2*}, and Yi Li\textsuperscript{3*}\\
	\small 
	\textsuperscript{1}Department of Biostatistics, University of Washington, Seattle, WA, {xialu@uw.edu}  \\
	\small
	\textsuperscript{2}Department of Statistics, University of Californina, Irvine, CA, {nanb@uci.edu} \\
	\small
	\textsuperscript{3}Department of Biostatistics, University of Michigan, Ann Arbor, MI, {yili@umich.edu} \\
	\small *To whom correspondence should be addressed
}

\date{}

\maketitle

\begin{abstract}
Modeling and drawing inference on the joint associations between single nucleotide polymorphisms and a disease has sparked interest in genome-wide associations studies. In the motivating Boston Lung Cancer Survival Cohort (BLCSC) data, the presence of a large number of single nucleotide polymorphisms of interest, though smaller than the sample size, challenges inference on their joint associations with the disease outcome.  In similar settings, we find that neither the de-biased lasso approach  \citep{van2014asymptotically}, which assumes sparsity on the inverse information matrix, nor the standard maximum likelihood method can yield confidence intervals with satisfactory coverage probabilities for generalized linear models. Under this ``large $n$, diverging $p$" scenario, we propose an alternative de-biased lasso approach by directly inverting the Hessian matrix without imposing the matrix sparsity assumption, which further reduces bias compared to the original de-biased lasso and ensures valid confidence intervals with nominal coverage probabilities. We establish the asymptotic distributions of any linear combinations of the parameter estimates, which lays the theoretical ground for drawing inference. Simulations show that the proposed {\em refined} de-biased estimating method performs well in removing bias and yields honest confidence interval coverage. We use the proposed method to analyze the aforementioned  BLCSC data, a large scale hospital-based epidemiology cohort study, that investigates the joint effects of genetic variants on  lung cancer risks. \\[0.1cm]
	
\noindent \textbf{Keywords:} asymptotics, bias correction, high-dimensional regression,  lung cancer, statistical inference   
\end{abstract}

%%%%%%%%%%%%%%%%%%%%%%%%%%%%%%%%%%%%%%%%%

%%%%%%%%%% main contents start here %%%%%%%%

\section{Introduction \label{sec:intro}}

To identify disease-related genetic markers,
traditional genome-wide association studies typically analyze the marginal associations of the disease outcome with single nucleotide polymorphisms (SNPs), one at a time.  As  marginal associations do not  account for the dependence among SNPs,  false positive discoveries may occur as  SNPs can be claimed as significant when they are correlated with  the causal variants \citep{schaid2018genome}.  Alternatively,
modeling the joint effects of SNPs within the target genes
can reduce false positives around true causal SNPs and improve prediction accuracy \citep{he2010variable},  and  also 
can pinpoint  functionally impactful loci in the coding regions \citep{taylor2001using, repapi2010genome} so as to  better understand the molecular mechanisms underlying cancer \citep{guan2011bayesian}. { For example, among a subset of 1,374 patients from the Boston Lung Cancer Survival Cohort (BLCSC),  an epidemiology study that investigates molecular mechanisms underlying  lung cancer, our goal is to study the joint associations of  lung cancer risk  with over 100 SNPs residing in nine target genes that have been reported to harbor relevant genetic variants 
	%\lu{using traditional marginal approaches - no need and please refrain from giving unnecessary details} 
	\citep{mckay2017large}. The results may aid in personalized medicine by properly implicating relevant genetic variants and their joint roles in pharmacogenomics \citep{evans2004moving}. Statistically, the analysis requires reliable estimation and inference on a fairly large number of regression parameters.}

{With lung cancer mechanisms  differing by smoking  predisposition  \citep{bosse2018decade}, analyzing  BLCSC  among the 1,077  smokers and 297  non-smokers, separately, is necessary.
	Included in our  models are  103 SNPs and 4 demographic variables, which, though smaller than the number of  smokers or non-smokers, are  large enough to defy the conventional maximum likelihood estimation (MLE) approach. In particular, for non-smokers, Table \ref{tab:nonsmoker} in Section \ref{sec:app} has shown unreasonably large MLE estimates with wide confidence intervals, e.g. a point estimate of -19.64 with a 95\% confidence interval (-6,705.04, 6,665.75) for SNP AX-62479186.  Failures of  MLE  in similar scenarios have  been documented in \citet{sur2019modern},} and further evidenced by our {later} simulation studies.% and the application to the Boston Lung Cancer Study.

The asymptotic framework  underlying these cases can be characterized as the number of parameters $p$ increasing with the sample size $n$, rather than staying fixed, which is often referred to as the ``large $n$, diverging $p$" scenario.
Drawing inference with generalized linear models (GLMs) under this  framework may facilitate a  range of applications, because the setting enables us to build valid models when the collected information increases with more subjects included in the study  \citep{wang2011gee}. {Several authors \citep{huber1973robust,yohai1979asymptotic,portnoy1984asymptotic,portnoy1985asymptotic} investigated the relative order between $p$ and $n$ that ensures the validity of M-estimators in linear regression;} %\lucmt{DELETED detailed summaries for older references to save space} 
%For M-estimators in linear regression, \citet{huber1973robust} required  $p^3 / n \rightarrow 0$ in order for any linear combinations of the M-estimator to be asymptotically normal, while  a  more relaxed result of $p^{5/2} /  n \rightarrow 0$ was reached by  \citet{yohai1979asymptotic}; \citet{portnoy1984asymptotic} and \citet{portnoy1985asymptotic} studied the consistency and the asymptotic normality of the M-estimators in linear regression, respectively, where the asymptotic normality required $\{  p \log(n) \}^{3/2} / n \rightarrow 0$ along with other conditions; \citet{portnoy1988asymptotic} argued that $p^2 / n \rightarrow 0$ would be needed for normal approximation of the maximum likelihood estimator in a setup of $p$-dimensional exponential family instead of multiple regression models.
\citet{he2000parameters} studied the consistency and the asymptotic normality of the M-estimators under different conditions and showed that $p^2 \log(p) / n \to 0$ would be needed for linear and logistic regression; \citet{wang2011gee} developed an asymptotic theory for the estimated regression parameters from generalized estimating equations with clustered binary outcomes, provided $p^3/n \rightarrow 0$. 
However, most of these methods incur substantial biases in empirical studies unless $p$ is very small.

Penalized regression methods have been developed over the decades to accommodate a large number of covariates. These methods, including the lasso  \citep{tibshirani1996regression}, the elastic net  \citep{zou2005regularization} and the Dantzig selector \citep{candes2007dantzig} among many others,  are  considered to be useful alternatives to the traditional variable selection methods such as forward or stepwise selection, especially in genetic studies \citep{schaid2018genome}.  These regularized methods yield  biased estimates, and, thus, cannot be directly used for drawing inference such as constructing confidence intervals with a nominal coverage probability.  

One stream of inferential methods is the post-selection inference conditional on selected models \citep{lee2016exact}, which requires conditional coverage to quantify the uncertainty associated with model selection. Other super-efficient procedures, such as SCAD \citep{fan2001variable, fan2004nonconcave} and adaptive lasso \citep{zou2006adaptive}, share the flavor of post-selection inference that is not the focus of this article. In particular, the inference based on the oracle estimation of \cite{fan2004nonconcave} requires  $p^5/n \rightarrow 0$.

Another school of methods is to draw inference by de-biasing the lasso estimates, termed de-biased lasso or de-sparsified lasso, which  relieves the restrictions of post-selection inference and  possesses nice theoretical and numerical properties in linear regression models (\citealt{van2014asymptotically, zhang2014confidence, javanmard2014confidence}). %\lucmt{DELETED}  %Various extensions of  de-biased lasso in linear regression are available   \citep{zhang2017simultaneous, dezeure2017high, mitra2016benefit, cai2019sparse}.

%\yl{Beyond linear regression, \citet{van2014asymptotically}, \citet{zhang2017simultaneous}, and \citet{eftekhari2019inference}   extended  de-biased  lasso  to generalized linear models with various settings. However, the method presented  a subpar numerical performance in our simulations. As revealed by our investigation, the unsatisfactory performance is likely due to a key  sparsity assumption on the inverse information matrix that rarely holds in generalized linear model settings.}
% \citet{zhang2017simultaneous} proposed a de-biased lasso procedure to simultaneously test a group of coefficients in generalized linear models, but without presenting sufficient numerical studies.  
%\citet{eftekhari2019inference} considered a de-biased lasso estimator for a low-dimensional component in a generalized single-index model with an unknown link function, but in a restrictive setting of elliptically symmetric designs.

\citet{van2014asymptotically} extended  de-biased lasso  to GLMs and developed the asymptotic normality theory for each component of the coefficient estimates; based on this work, \citet{zhang2017simultaneous} proposed a multiplier bootstrap procedure
%by extending \citet{van2014asymptotically}, 
to draw inference on a group of coefficients in GLMs.
%yet without sufficient numerical evidence for the performance. 
However, the de-biased lasso approach presented  subpar  performance with non-negligible biases and poor coverage of confidence intervals, {as seen from Figures \ref{fig:logit_n1k_ar1} and \ref{fig:logit_n1k_cs} for a logistic example in Section 4 that mimics the BLCSC setting,  because a key sparsity assumption on the inverse information matrix may not hold in GLM settings.}
%These de-biased lasso approaches for generalized linear models, however, have their own limitations, with a key limitation of requiring sparse inverse information matrix that rarely exists in practice, likely leading to the subpar numerical performance that will be illustrated in our simulation studies. %\citet{eftekhari2019inference} %this article is less relevant.
%considered a de-biased lasso estimator for a low-dimensional component in a generalized single-index model with an unknown link function, but in a restrictive setting of elliptically symmetric designs.

%With a strong emphasis on satisfactory bias correction and confidence interval coverage probabilities, 
To address the limitation and for valid inference with GLMs, we propose a {\em refined} de-biased lasso estimating method specifically tailored to the ``large $n$, diverging $p$" scenario as in {the motivating BLCSC dataset}. Our proposed method  estimates the inverse information matrix by directly inverting the sample Hessian matrix, which requires no structural assumptions on the inverse information matrix. We establish the asymptotic distributions for any linear combinations of the resulting estimates, laying the theoretical foundation for applications. Simulations demonstrate its better performance in reducing  biases and preserving confidence interval coverage probabilities than the conventional MLE and the original de-biased lasso \citep{van2014asymptotically} for a wide range of $p / n$ ratios, and all three methods yield almost identical results when $p$ is rather small relative to $n$. 
%Therefore, it is practically preferable to use the proposed refined de-biased method for any $p$ in the ``large $n$, diverging $p$" scenario for the parameter estimation and inference in generalized linear models.  

The rest of this article is organized as follows. Section~\ref{sec:method} describes in detail the model setup and the proposed {\em refined} de-biased lasso estimating method. Asymptotic results for the proposed method are provided in Section~\ref{sec:theory}, followed by simulation studies in Section~\ref{sec:sim}. Findings on the joint associations between SNPs in target genes and lung cancer risks by applying the proposed method to the motivating BLCSC data are reported in Section~\ref{sec:app}.
Not  to deviate from the main flow,  we put off the discussion of the distinctions of the proposed method from the existing high-dimensional inference literature to Section~\ref{sec:discuss}.

\section{Method \label{sec:method}}

\subsection{Background and set-up in generalized linear models \label{subsec:setup}}

We start with some  commonly used notation. For a vector $\ba$, $\|\ba\|_q$ denotes its $\ell_q$ norm, $q \ge 1$. Denote by $\lambdamax(\bA)$ and $\lambdamin(\bA)$  the largest and the smallest eigenvalues of a symmetric matrix $\bA$, respectively.
For a real matrix $\bA = (A_{ij})$, let $\| \bA \| = \sup_{\| \bx \|_2 = 1} \| \bA \bx \|_2 = [\lambdamax(\bA^T \bA)]^{1/2} $ be the spectral norm of $\bA$. The induced matrix $\ell_1$ norm is $\|\bA\|_1 = \max_j \sum_i  |A_{ij}|$, and  when $\bA$ is symmetric,  $\|\bA\|_1 = \max_i \sum_j  |A_{ij}|$ also holds. The element-wise $\ell_{\infty}$ norm is $\|\bA\|_{\infty} = \max_{i,j} |A_{ij}|$.  
With two positive sequences $a_n$ and $b_n$, write $a_n=\mathcal{O}(b_n)$ if there exist $c>0$ and $N>0$ such that $a_n<cb_n$ for all $n>N$, {and $a_n=o(b_n)$ if $a_n/b_n\rightarrow 0$ as $n\rightarrow\infty$}. We write $a_n\asymp b_n$ if 
$a_n=\mathcal{O}(b_n)$ and $b_n=\mathcal{O}(a_n)$.

Denote by $y_i$ the response variable and $\bx_i = (1, \widetilde{\bx}_i^T)^T \in \mathbb{R}^{p+1}$ for $i = 1,\ldots, n$, where 
``1"  corresponds to the intercept term, and $\widetilde{\bx}_i$ represents the $p$ covariates. Let $\bX$ be the $n \times (p+1)$ covariate matrix with $\bx_i^T$ being the $i$th row.
We assume  that $\left\{(y_i, \bx_i)\right\}_{i=1}^n$ are independent and identically distributed  copies of $(y,\bx)$. 
Define  the {\em negative} log-likelihood function as the following, up to an additive constant irrelevant to the unknown parameters, when the conditional density of $y$ given $\bx$  belongs to an exponential family:
\begin{equation} \label{eq:loss_func}
	\rho_{\bxi}(y, \bx) = \rho(y, \bx^T \bxi) = - y \bx^T \bxi + b(\bx^T \bxi)
\end{equation}
where $b(\cdot)$ is a known twice continuously differentiable function, $\bxi = (\beta_0, \bbeta^T)^T \in \mR^{p+1}$ denotes the vector of  coefficients and  $\beta_0 \in \mR$ is the intercept parameter. The unknown true coefficient vector is  $\bxi^0 = (\beta_0^0, {\bbeta^{0}}^T)^T$.

\subsection{De-biased lasso \label{subsec:debias}}

With  $\rho_{\bxi}(y, \bx) = \rho(y, \bx^T \bxi)$ given in (\ref{eq:loss_func}),  denote  by $\dot{{\rho}}_{\bxi}$ and $\ddot{{\rho}}_{\bxi}$ its first and second order derivatives with respect to $\bxi$, respectively. For any function $g(y, \bx)$, let $\Pn g = n^{-1} \sum_{i=1}^n g(y_i, \bx_i)$. Then for any $\bxi \in \mathbb{R}^{p+1}$, we denote the empirical loss function based on the random sample $\{ (y_i, \bx_i) \}_{i=1}^n$ by $\Pn\rho_{\bxi} = n^{-1} \sum_{i=1}^n \rho_{\bxi}(y_i, \bx_i)$, and its first and second order derivatives with respect to $\bxi$ by $\Pn \dot{{\rho}}_{\bxi} = n^{-1} \sum_{i=1}^n \partial \rho_{\bxi}(y_i, \bx_i) / \partial \bxi$ and $\widehat{\bSigma}_{\bxi} =\Pn \ddot{{\rho}}_{\bxi} = n^{-1} \sum_{i=1}^n \partial^2 \rho_{\bxi}(y_i, \bx_i) / \partial \bxi \partial \bxi^T$. Two important population-level matrices are the information matrix, $\bSigma_{\bxi} = \E ( \widehat{\bSigma}_{\bxi} ) = \E (\Pn \ddot{ {\rho}}_{\bxi})$, and its inverse  $\bTheta_{{\bxi}} = \bSigma_{\bxi}^{-1}$.  With a tuning parameter $\lambda > 0$, the lasso estimator for $\bxi^0$ is defined as
\begin{equation} \label{eq:lasso}
	\widehat{\bxi} = \mathop{\arg\min}_{\bxi = (\beta_0, \, \bbeta^T)^T \in \mR^{p+1}} \left\{ \Pn\rho_{\bxi} + \lambda \| \bbeta \|_1 \right\},
\end{equation}
where we suppress the dependence of
$\lambda$ on $n$ and $p$ for notational ease.
We clarify that we do not penalize the intercept $\beta_0$ in (\ref{eq:lasso}). As such, the theoretical properties for $\widehat{\bxi}$, including the bounds of  estimation errors and prediction errors, are still the same as those in   \citet{van2008high} and \citet{van2014asymptotically}, where all of the parameters are estimated via penalization \citep{buhlmann2011statistics}.

We briefly review the de-biased lasso estimator and its bias decomposition. The first order Taylor expansion of $\Pn \dot{{{\rho}}}_{\bxi^0}$ at $\widehat{\bxi}$ gives
\begin{equation} \label{eq:taylor}
	\Pn \dot{{{\rho}}}_{\bxi^0} = \Pn \dot{{\rho}}_{\widehat{\bxi}} + \Pn \ddot{{\rho}}_{\widehat{\bxi}} (\bxi^0 - \widehat{\bxi}) + {\bDelta},
\end{equation}
where ${\bDelta}$ is a $(p+1)$-dimensional vector of remainder terms with the $j$th element 
\begin{equation*} %\label{eq:taylor_remainder}
	\Delta_j  = \displaystyle \frac{1}{n} \sum_{i=1}^n  \{  \ddot{\rho}(y_i, a_j^*) - \ddot{\rho}(y_i, \bx_i^T \widehat{\bxi})  \} x_{ij} \bx_i^T (\bxi^0 - \widehat{\bxi}),
\end{equation*}  
in which  $\ddot{\rho}(y,a) = {\partial^2 \rho(y,a)} /{\partial a^2}$, and $a_j^*$ lies between $\bx_i^T\widehat{\bxi}$ and $\bx_i^T\bxi^0$. In linear regression models, ${\bDelta} = \bm{0}$,  which  is not always the case for GLMs. Let  $\bM$ be a $(p+1) \times (p+1)$ matrix approximating $\Thetabeta$. Multiplying both sides of  (\ref{eq:taylor}) by ${\bM}_j$, the $j$th row of $\bM$, we obtain  the following equality for the $j$th component
\begin{equation} \label{eq:derive_bhat_M}
	\widehat{\xi}_j  -  \xi^0_j  + \overbrace{ \left( - \bM_j \Pn \dot{{\rho}}_{\widehat{\bxi}}   \right) }^{I_j} 
	+ \overbrace{\left( -\bM_j {\bDelta} \right) }^{II_j} 
	+ \overbrace{\left( \bM_j \Pn \ddot{{\rho}}_{\widehat{\bxi}} - \bm{e}_j^T \right)  \left( \widehat{\bxi} - \bxi^0 \right)}^{III_j}   = - \bM_j \Pn \dot{{\rho}}_{{\bxi}^0},
\end{equation}
where $\bm{e}_j$ is the unit vector with the $j$th element being 1.
\citet{van2014asymptotically} obtained the above  decomposition by inverting the Karush--Kuhn--Tucker condition while using the node-wise lasso estimate of $\Thetabeta$, denoted by $\widetilde{\bTheta}$, to be the approximation matrix $\bM$. Originally proposed for neighborhood selection in high-dimensional graphs \citep{meinshausen2006high}, the node-wise lasso approach estimates  a sparse matrix $\Thetabeta$ that consists of many zero elements. In (\ref{eq:derive_bhat_M}), the asymptotic bias term $I_j$ is estimable, and $\widehat{\xi}_j + I_j$ corresponds to the de-biased lasso estimator in \citet{van2014asymptotically} with $\bM = \widetilde{\bTheta}$.  {In practice, the  $II_j$ and $III_j$ terms in (\ref{eq:derive_bhat_M}) are not computable because they involve the unknown $\bxi^0$,} and ignoring them  may not help fully remove  biases. Particularly, {the sparse estimator  $\widetilde{{\bTheta}}$ may result in non-negligible $II_j$ and $III_j$ terms compared to $I_j$.}
Consequently, the  $\widetilde{\bTheta}$-based de-biased lasso estimator \citep{van2014asymptotically}  incurs much bias and possesses an unsatisfactory inference performance for GLMs as evidenced by our simulations.

%For $M = \widetilde{\Theta}$, $III_j \ne 0$. Although \citet{van2014asymptotically} controlled ${n}^{1/2} III_j = \oP(1)$ under the sparsity condition of $\Thetabeta$, this term may not be negligible in practice where the sparsity of $\Thetabeta$  cannot be guaranteed \lucmt{Would prefer to delete this, since it only talks about eliminating $III_j$ instead of balancing all bias terms}.  Then a further reduction of the bias in the de-biased lasso estimator would help with more reliable inference.

{On the other hand, without the matrix sparsity assumption, one may obtain $\bM$  by solving an optimization problem originally  proposed for linear models \citep{javanmard2014confidence}:
	\begin{equation} \label{eq:qp}
		\min \{ \bzeta^T \widehat{\bSigma}_{\widehat{\bxi}} ~ \bzeta : \bzeta \in \mR^{p+1}, \| \widehat{\bSigma}_{\widehat{\bxi}} ~ \bzeta - \bm{e}_j \|_{\infty} \le \mu_n \}
	\end{equation}
	for $j = 1, \ldots, p+1$ and $\mu_n \ge 0$. Under the conditions in Theorem \ref{thm:main} of Section \ref{sec:theory}, the Hessian matrix $\widehat{\bSigma}_{\widehat{\bxi}}$ is invertible with probability going to one as $n \rightarrow \infty$, and the rows of $\widehat{\bSigma}_{\widehat{\bxi}}^{-1}$ are solutions to (\ref{eq:qp}) when $\mu_n = 0$. 
	%\lucmt{DELETED: that completely eliminates bias term $III_j$ \bin{I'd add it back.}}. 
	As confirmed by our simulations in a variety of  regimes, $\mu_n = 0$  generally performs the best
	in overall bias correction to $II_j + III_j$ and statistical inference
	as $\mu_n$ varies from 0 to 1; see Section \ref{sec:sim}.
	This motivates us to replace $\bM$ with $\widehat{\bTheta} = \widehat{\bSigma}_{\widehat{\bxi}}^{-1}$,  denote by $\widehat{\bTheta}_j$ the $j$th row of $\widehat{\bTheta}$, and reexpress  (\ref{eq:derive_bhat_M}) as} 
\begin{equation} \label{eq:derive_bhat_inv2}
	\widehat{\bxi} - \bxi^0 + \left( - \widehat{\bTheta} \Pn \dot{{\rho}}_{\widehat{\bxi}}   \right) +  \left( -\widehat{\bTheta} {\bDelta} \right)    = - \widehat{\bTheta} \Pn \dot{{\rho}}_{{\bxi}^0}.
\end{equation}
Therefore, we propose a refined de-biased lasso estimator based on  $\widehat{\bTheta}$:
\begin{equation} \label{refined-est}
	\widehat{{\bb}} = \widehat{\bxi} - \widehat{\bTheta} \Pn \dot{{\rho}}_{\widehat{\bxi}}. 
\end{equation}
%\lucmt{DELETED: which  corrects more biases from the lasso estimator than the original de-biased lasso approach by \citet{van2014asymptotically}} 
{We will show that our proposed method possesses desirable asymptotic properties and, in general, performs better than the original de-biased lasso approach \citep{van2014asymptotically} in finite sample settings.}

\section{Theoretical results	\label{sec:theory}}

%\redtxt{ (move to theoretical properties).    Let $L = \| \bSigma_{\bxi^0}^{-1/2} \bx_1 \omega_1(\bxi^0) \|_{\psi_2}$, which characterizes the probabilistic tail behavior of the weighted covariates.}

Without loss of generality, we assume that each covariate has been centered to have mean zero. Let $\bX_{\bxi} = \bW_{\bxi}\bX$ be the weighted design matrix, where $\bW_{\bxi}$ is a diagonal matrix with elements $\omega_i(\bxi) = \{ \ddot{\rho}(y_i, \bx_i^T \bxi) \}^{1/2},~ i = 1,\ldots, n$. Then, for any $\bxi \in \mathbb{R}^{p+1}$,  $\widehat{\bSigma}_{\bxi}$ can be rewritten as $ \widehat{\bSigma}_{\bxi} = \bX_{\bxi}^T \bX_{\bxi}/n$. Recall that the population information matrix $\bSigma_{\bxi} = \E ( \widehat{\bSigma}_{\bxi} ) = \E (\Pn \ddot{ {\rho}}_{\bxi})$, and its inverse matrix is $\bTheta_{{\bxi}} = \bSigma_{\bxi}^{-1}$, {which are respectively equal to $\E (\bX^T \bX /n)$ and $\{\E (\bX^T \bX /n)\}^{-1}$  only for linear models, but not for GLMs}.
The  $\psi_2$-norm  \citep{vershynin2010introduction} is useful for characterizing the convergence rate of $\widehat{\bTheta} = \widehat{\bSigma}_{\widehat{\bxi}}^{-1}$. Explicitly,
for a random variable $Y$, its $\psi_2$-norm  is defined as
$
\| Y \|_{\psi_2} =  \sup_{r \ge 1} r^{-1/2} (\E|Y|^r)^{1/r},
$
and $Y$ is defined to be a sub-Gaussian random variable if 
$\| Y \|_{\psi_2} < \infty$.
%\yl{ $\mathcal{O}(1)$ is over used here as  it is used for a sequence. why u need O(1) instead of being finite? A: a quantity involving $n$ and $p$ can also be finite (not infinity)}. But Z is a single random variable here.  A: But the distribution may have something to do with $n$, e.g. the inner product $<Z, a>$ we considered below. No. Here you are just stating a general definition for a random variable.  n and p should be discussed later.
%A: Okay, I see. How about finite $\psi_2$ norm for a random variable, and $O(1)$ for a random vector? fine but you have to give a context. that is O(1) is wrt to n or p or both. Okay!
%for a constant $c>0$.
For a random vector ${\bZ} \in \mathbb{R}^{p+1} $, its $\psi_2$-norm is defined as $\| {\bZ}\|_{\psi_2} =  \sup_{\| {\ba}\|_2=1} \|\langle {\bZ}, {\ba} \rangle\|_{\psi_2},$ and  ${\bZ}$ is called sub-Gaussian if  $\langle  {\bZ}, \ba  \rangle$ is a  sub-Gaussian random variable for all $\ba \in \mathbb{R}^{p+1}$ with $\|  \ba \|_2 = 1$ \citep{vershynin2010introduction}.   
We list  the regularity conditions as follows.

\begin{assumption} \label{assump1}
	The elements in $\bX$ are bounded almost surely. That is, $\|\bX\|_{\infty} \le K$ almost surely for a constant $K>0$. In addition, the rows of $\bX$ are sub-Gaussian random vectors.
	%i.e. their sub-Gaussian norm $\| x_i \|_{\psi_2} = \mathcal{O}(1)$. 
\end{assumption}

\begin{assumption} \label{assump2}
	$\bSigma_{\bxi^0}$ is positive definite with bounded eigenvalues such that, for two positive constants $c_{\mathrm{min}}$ and $c_{\mathrm{max}}$,  $c_{\mathrm{min}} \le \lambdamin(\bSigma_{\bxi^0}) \le \lambdamax(\bSigma_{\bxi^0}) \le c_{\mathrm{max}} < \infty$.  
\end{assumption}

\begin{assumption} \label{assump3}
	The derivatives $\dot{\rho}(y, a) = {\partial \rho(y, a)} / {\partial a}  $ and $\ddot{\rho}(y, a) = {\partial^2 \rho(y, a)} / {\partial a^2} $ exist for all $(y,a)$. Further in some $\delta$-neighborhood, $\delta > 0$, $\ddot{\rho}(y,a)$ is Lipschitz such that for some absolute constant $c_{Lip} > 0$,
	\[
	\displaystyle \max_{a_0 \in \{\bx_i^T \bxi^0\}}   \sup_{\max(|a-a_0|, |\widehat{a} - a_0|) \le \delta}  \sup_{y \in \mathcal{Y}}  \displaystyle \frac{| \ddot{\rho}(y,a) -  \ddot{\rho}(y,\widehat{a})|}{|a-\widehat{a}|}	\le c_{Lip}.
	\]
	And the derivatives are  bounded in the sense that there exist two constants $K_1, K_2 > 0$ such that
	\[
	\begin{array}{c}
		\displaystyle \max_{a_0 \in \{\bx_i^T \bxi^0\}} \sup_{y \in \mathcal{Y}} |\dot{\rho}(y, a_0)| \le K_1, \\
		\displaystyle \max_{a_0 \in \{\bx_i^T \bxi^0\}} \sup_{|a-a_0|\le \delta} \sup_{y \in \mathcal{Y}} |\ddot{\rho}(y,a)| \le K_2. 
	\end{array}
	\]
\end{assumption}

\begin{assumption} \label{assump4}
	$\| \bX \bxi^0 \|_{\infty}$ is bounded from above almost surely.
\end{assumption}

\begin{assumption} \label{assump5}
	The covariance matrix $\E({\bX}^T \bX / n)$ is positive definite with eigenvalues bounded away from 0 and from above.
\end{assumption}

It is common to assume bounded covariates as in Assumption \ref{assump1} and bounded eigenvalues for the information matrix as in Assumption \ref{assump2} in high-dimensional inference literature \citep{van2014asymptotically,ning2017general}. 
%With the information matrix having bounded eigenvalues, the sub-Gaussianity of the covariate vectors in Assumption \ref{assump1} implies that $\bSigma_{{\bxi}^0}^{-1/2} \bx_i$ is also sub-Gaussian, the latter of which is an important condition in \citet{javanmard2014confidence} for high-dimensional linear models. - why mentioned  citet{javanmard2014confidence}? if you need to use the fact $\bSigma_{{\bxi}^0}^{-1/2} \bx_i$ is sub-Gaussian, state it directly.
Assumption \ref{assump2} 
%results in a compatibility condition that is essential 
is needed to derive the rate of convergence for $\widehat{\bxi}$. Assumption \ref{assump3} 
%nice local properties - not informative
{specifies the required smoothness and local properties
	%the derivatives 
	of the  loss function} $\rho(y, \bx^T \bxi)$ \citep{van2014asymptotically}. 
{Since each element of $\bX \bxi^0$ is the (transformed) conditional mean of $y_i$, it is  reasonable to assume its boundedness in Assumption \ref{assump4} as in  \citet{van2014asymptotically} and \citet{ning2017general} for generalized linear models, and  in \citet{kong2014non} and \citet{fang2017testing} for the Cox models. Also  Assumption  \ref{assump4} is needed to bound the variance of $y_i$ and keep it away from 0 for generalized linear models.}
%on bounded linear predictors in high-dimensional inference literature \citep{van2014asymptotically,ning2017general}, which ensures the quadratic margin behavior of the excess risk and is useful to obtain the rate for $\| \bX (\widehat{\bxi} - \bxi^0) \|_2^2 /n$ \citep{buhlmann2011statistics}. - even with some fancy terms, do not mean much to the refs.
Assumption \ref{assump5} is a  mild requirement {for  random covariates;  a similar condition on the sample covariance matrix can be found in \citet{wang2011gee}.} Unlike \citet{van2014asymptotically}, we  have  avoided an assumption on the boundedness of $\| \Thetabeta \bx_i \|_{\infty}$, {which is not verifiable and closely related to the sparsity requirement of $\Thetabeta$ under Assumption 1.} 
%In fact,  under Assumption \ref{assump1} a sufficient condition 
%\yicmt{i do not see the logic here or the points you wanna make. if the boundedness of $\| \Thetabeta \bx_i \|_{\infty}$ were a sufficient condition of  $\| \Thetabeta \|_{1} = \mathcal{O}(1)$, then perhaps you could argue that you have avoided specifying sparsity on $\Thetabeta$. But vice versa may not make a good argument.} for a bounded $\| \Thetabeta \bx_i \|_{\infty}$ is $\| \Thetabeta \|_{1} = \mathcal{O}(1)$, which is in effect a sparsity condition on the rows of $\Thetabeta$ using an $\ell_1$ norm. 
%With these more general assumptions, particularly  on $\Thetabeta$, the main theoretical result we will present below may not be directly comparable to those in the ``large $p$, small $n$" scenario. 

Let $s_0$ denote the number of non-zero elements in $\bxi^0$, and
consider $\widehat{\bb} = \widehat{\bxi} - \widehat{\bTheta} \Pn \dot{\rho}_{\widehat{\bxi}}$ as defined in \eqref{refined-est}.  Theorem \ref{thm:main} establishes  asymptotic normality for (multiple) linear combinations of  $\widehat{\bb}$,  with a proof provided in Web Appendix A.

\medskip

\begin{theorem} \label{thm:main}
	% Consider $\widehat{\bb} = \widehat{\bxi} - \widehat{\bTheta} \Pn \dot{\rho}_{\widehat{\bxi}}$ as defined in \eqref{refined-est}.
	With $\lambda \asymp \{\log(p)/n \}^{1/2}$, assume that $\displaystyle  {{p^2}/{n}} \rightarrow 0$ and $  s_0 \log(p) ({p/n})^{1/2} \rightarrow 0$ as $n \rightarrow \infty$. {Under  Assumptions 1--5, { we have that $\widehat{\bSigma}_{\widehat{\bxi}}$ is invertible with probability going to one, and that}
		\begin{itemize}
			\item[] (i) for {a constant vector} $\balpha_n \in \mathbb{R}^{p+1}$ with $||\balpha_n||_2 = 1$, 
			\[
			\displaystyle \frac{{n}^{1/2}  \balpha_n^T(\widehat{\bb} - \bxi^0)}{(\balpha_n^T \widehat{\bTheta} \balpha_n)^{1/2}} \overset{\mathcal{D}}{\to} N(0,1) \mathrm{~as~} n \to \infty;
			\]
			%in distribution as $n \to \infty$.
			\item[] (ii) for a fixed integer $m >  1$ and a constant matrix $\bA_n \in \mR^{m \times (p+1)}$ satisfying $\| \bA_n^T \| \le c_*$ for some constant $c_*$ and $\bA_n \Thetabeta \bA_n^T \to \bF$ for some $\bF \in \mR^{m \times m}$,
			\[
			{n}^{1/2} \bA_n (\widehat{\bb} - \bxi^0) \overset{\mathcal{D}}{\to} N_m (\mathbf{0}, \bF) \mathrm{~as~} n \to \infty.
			\]
			%in distribution as $n \to \infty$.
		\end{itemize}
	}
\end{theorem}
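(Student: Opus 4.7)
The starting point is the Taylor identity~(\ref{eq:derive_bhat_inv2}), which rearranges to
\begin{equation*}
\sqrt{n}\,\balpha_n^T (\widehat{\bb} - \bxi^0) = T_1 + T_2 + T_3,
\end{equation*}
where $T_1 = -\sqrt{n}\,\balpha_n^T \Thetabeta \Pn \dot{\rho}_{\bxi^0}$, $T_2 = -\sqrt{n}\,\balpha_n^T (\widehat{\bTheta}-\Thetabeta) \Pn \dot{\rho}_{\bxi^0}$, and $T_3 = \sqrt{n}\,\balpha_n^T \widehat{\bTheta} \bDelta$. The plan is to apply the Lindeberg--Feller CLT to $T_1$, to show that $T_2, T_3 = \oP(1)$, and finally to replace $\balpha_n^T\Thetabeta\balpha_n$ by $\balpha_n^T\widehat{\bTheta}\balpha_n$ in the limiting variance via Slutsky, which is justified by $|\balpha_n^T(\widehat{\bTheta}-\Thetabeta)\balpha_n| \le \|\widehat{\bTheta}-\Thetabeta\|$. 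The summands of $T_1$ are i.i.d.\ mean zero with variance $\balpha_n^T\Thetabeta\balpha_n \in [1/c_{\mathrm{max}}, 1/c_{\mathrm{min}}]$ by Assumption~\ref{assump2} and the information identity $\E[\dot{\rho}_{\bxi^0}\dot{\rho}_{\bxi^0}^T]=\bSigma_{\bxi^0}$; Assumptions~\ref{assump1} and~\ref{assump3} bound each summand in magnitude by $K_1\|\Thetabeta\balpha_n\|_1 = \OP(\sqrt{p})$, so the Lindeberg condition follows under $p^2/n\to 0$.

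The crux is a spectral bound on $\widehat{\bTheta}-\Thetabeta$. I would use the identity
\begin{equation*}
\widehat{\bTheta}-\Thetabeta = -\widehat{\bTheta}\,(\widehat{\bSigma}_{\widehat{\bxi}} - \bSigma_{\bxi^0})\,\Thetabeta
\end{equation*}
together with $\|\widehat{\bSigma}_{\widehat{\bxi}}-\bSigma_{\bxi^0}\| \le \|\widehat{\bSigma}_{\widehat{\bxi}}-\widehat{\bSigma}_{\bxi^0}\| + \|\widehat{\bSigma}_{\bxi^0}-\bSigma_{\bxi^0}\|$. Standard matrix concentration for sub-Gaussian rows (Assumptions~\ref{assump1},~\ref{assump5}) gives $\|\widehat{\bSigma}_{\bxi^0}-\bSigma_{\bxi^0}\| = \OP(\sqrt{p/n})$, and the Lipschitz property of Assumption~\ref{assump3} with Cauchy--Schwarz yields
\begin{equation*}
\|\widehat{\bSigma}_{\widehat{\bxi}}-\widehat{\bSigma}_{\bxi^0}\| \le c_{Lip}\max_i|\bx_i^T(\widehat{\bxi}-\bxi^0)|\cdot \lambdamax(\bX^T\bX/n) = \OP\!\bigl(s_0\sqrt{\log p/n}\bigr),
\end{equation*}
using $\|\bx_i\|_\infty\le K$ and the standard lasso $\ell_1$-rate $\|\widehat{\bxi}-\bxi^0\|_1 = \OP(s_0\sqrt{\log p/n})$ derivable from Assumptions~\ref{assump1}--\ref{assump3} as in \citet{van2008high}. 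Assumption~\ref{assump2} then gives $\lambdamin(\widehat{\bSigma}_{\widehat{\bxi}}) \ge c_{\mathrm{min}} - \oP(1)$, so $\widehat{\bSigma}_{\widehat{\bxi}}$ is invertible w.p.\ tending to one, $\|\widehat{\bTheta}\| = \OP(1)$, and $\|\widehat{\bTheta}-\Thetabeta\| = \OP(\sqrt{p/n}) + \OP(s_0\sqrt{\log p/n}) = \oP(p^{-1/2})$.

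The remainders then vanish. Since $\E\|\sqrt{n}\,\Pn\dot{\rho}_{\bxi^0}\|_2^2 = \mathrm{tr}(\bSigma_{\bxi^0}) = \mathcal{O}(p)$,
\begin{equation*}
|T_2| \le \|\widehat{\bTheta}-\Thetabeta\|\cdot\sqrt{n}\|\Pn\dot{\rho}_{\bxi^0}\|_2 = \OP(p/\sqrt{n}) = \oP(1).
\end{equation*}
A second use of the Lipschitz bound in Assumption~\ref{assump3} gives $\max_j|\Delta_j| \le c_{Lip}K\,(\widehat{\bxi}-\bxi^0)^T(\bX^T\bX/n)(\widehat{\bxi}-\bxi^0) = \OP(s_0\log p/n)$, so $|T_3| \le \sqrt{n}\,\|\widehat{\bTheta}\|\,\sqrt{p+1}\,\|\bDelta\|_\infty = \OP(s_0\log p\sqrt{p/n}) = \oP(1)$ by the second rate assumption. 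Part~(ii) then follows from part~(i) via the Cram\'er--Wold device: for any unit $\ba\in\mathbb{R}^m$, apply (i) with $\balpha_n = \bA_n^T\ba/\|\bA_n^T\ba\|_2$ and use $\|\bA_n^T\|\le c_*$ together with $\bA_n\Thetabeta\bA_n^T\to\bF$ to identify the limiting variance as $\ba^T\bF\ba$. The main obstacle I anticipate is precisely the $\oP(p^{-1/2})$ spectral bound on $\widehat{\bTheta}-\Thetabeta$: without the matrix sparsity assumption of \citet{van2014asymptotically}, row-wise $\ell_1$ arguments are unavailable and one must control the operator norm directly, which is exactly what dictates the $p^2/n\to 0$ budget.
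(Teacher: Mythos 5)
Your proposal is correct and follows essentially the same route as the paper: the same decomposition of $\sqrt{n}\,\balpha_n^T(\widehat{\bb}-\bxi^0)$ into a Lindeberg--Feller main term plus the two remainders, the same resolvent-identity spectral bound $\|\widehat{\bTheta}-\Thetabeta\| = \OP\{(p/n)^{1/2}+s_0\lambda\}$ split into a sub-Gaussian covariance-concentration piece and a Lipschitz piece driven by the lasso $\ell_1$-rate, the same $\OP(s_0\lambda^2)$ control of $\|\bDelta\|_\infty$, and the same Cram\'er--Wold step for part~(ii). The only differences are cosmetic (e.g., bounding the Lindeberg summands via $\|\Thetabeta\balpha_n\|_1$ rather than Cauchy--Schwarz with $\|\Thetabeta\|\,\|\bx_i\|_2$), and your closing remark correctly identifies the operator-norm control of $\widehat{\bTheta}-\Thetabeta$ as the step that fixes the $p^2/n\to 0$ requirement.
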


\begin{remark}
	\normalfont
	Theorem \ref{thm:main} enables us  to construct a $100 \times (1-r)$\% confidence interval for $\balpha_n^T \bxi^0$ as
	$
	[  \balpha_n^T ~ \widehat{\bb} - z_{r/2} ({\balpha_n^T \widehat{\bTheta} \balpha_n / n})^{1/2}  , ~ \balpha_n^T ~ \widehat{\bb} + z_{r/2} ({\balpha_n^T \widehat{\bTheta} \balpha_n / n})^{1/2}  ~ ],
	$
	where $0<r<1$ and  $z_{r/2}$ is the upper $(r/2)$th quantile of the standard normal distribution. Here, $\balpha_n$ can be arbitrarily dense, instead of having only a few non-zero elements such as $\balpha_n = \bm{e}_j$ in \citet{van2014asymptotically}. {A $100 \times (1-r)$\% confidence region for $\bA_n \bxi^0$ can be constructed as $ \{ \ba \in \mR^m: n (\bA_n \widehat{\bb} - \ba)^T (\bA_n \widehat{\bTheta} \bA_n^T)^{-1} (\bA_n \widehat{\bb} - \ba) \le \chi^2_{m, r} \} $, where $\chi^2_{m, r}$ is the upper $r$th quantile of $\chi_m^2$.
	}
\end{remark}

%\begin{remark}
%	\normalfont
%	For high-dimensional literature, usually $\alpha_n = e_j$ and assume $\| \widehat{\Theta} x_i \|_{\infty} = \mathcal{O}(1)$. Even though the original de-biased lasso has less restricted model sparsity assumption on $s_0$, i.e. $s_0 = o(\sqrt{n}/\log(p))$, it is not directly comparable with our results. 	
%\end{remark}

\begin{remark}
	\normalfont
	In a linear regression setting with $\bY= (y_1, \ldots, y_n)^T$, some algebra shows that the proposed estimator (\ref{refined-est}) is identical to the MLE, $(\bX^T \bX)^{-1} \bX^T \bY$, regardless of the choice of the  initial estimate,  $\widehat{\bxi}$. Therefore, as a by-product, Theorem \ref{thm:main} characterizes the asymptotics of the MLE for linear models with  a diverging number of coefficients, {which only requires $p^2/n \to 0$. {This can be shown 
			following a similar proof of Theorem \ref{thm:main} with $\bDelta = 0$  for linear regression models, where $\widehat\bTheta$ is free of regression parameters. It is obvious that regularity conditions can be simplified for linear regression models.}
		%with $\widehat{\bTheta} = (\bX^T \bX/n)^{-1}$ and $\Thetabeta = \{E(\bX^T \bX/n)\}^{-1}$.
	} 
	
	%It follows that $\Delta = 0$ in the Taylor expansion and 
	%In this case,  the theorem still holds with  $p^2/n \to 0$ - just redundant info. 
	
	%\lucmt{DELETED: ``and $s_0 ({p \log(p) / n})^{1/2} \to 0$",} 
	%\lu{which differs from \citet{he2000parameters} by a factor of $\log(p)$ - please do no make vague statements. differ - better or worse? }. 
\end{remark}

{
	\begin{remark}
		\normalfont
		Binary covariates, particularly dummy variables for categorical covariates,  satisfy {the assumptions} %(DELETED: Assumptions \ref{assump1} and \ref{assump2})}  
		for Theorem \ref{thm:main}. % if you drop them satisfying these assumptions, there is no justification for the following. 
		Therefore, applications of Theorem \ref{thm:main}  encompass inference for categorical covariates, such as drawing inference on comparisons between multiple intervention groups or testing associations of  multi-level categorical covariates with outcomes.
	\end{remark}
}

\section{Numerical experiments \label{sec:sim}}

Under  the ``large $n$, diverging $p$" scenario,  we  compare the estimation biases  and coverage probabilities of confidence intervals across the following  estimators: (i) the original de-biased lasso estimator obtained by using the node-wise lasso estimator $\widetilde{\bTheta}$ in \citet{van2014asymptotically} (ORIG-DS),  (ii) the conventional maximum likelihood estimator (MLE), and (iii) our proposed refined de-biased lasso estimator $\widehat{\bb}$,  based on the inverse matrix estimation  $\widehat{\bTheta} = \widehat{\bSigma}_{\widehat{\bxi}}^{-1}$ (REF-DS).

Simulations using the logistic and Poisson regression models yield  similar observations, and we only report results from logistic regression.  A total of $n=1,000$ observations, each with $p = 40, 100, 300, 400$ covariates, are simulated. Within $\bx_i = (1, \widetilde{\bx}_i^T)^T$, $\widetilde{\bx}_i$ are independently generated from $N_{p} (\bm{0}, \bSigma_x)$ before being truncated at $\pm 6$, and $y_i \mid \bx_i \sim Bernoulli(\mu_i)$, where $\mu_i = \exp(\bx_i^T \bxi^0)/\{ 1+ \exp(\bx_i^T \bxi^0) \}$. The intercept $\beta_0^0 = 0$, and  $\beta_1^0$  varies from 0 to 1.5 with 40 equally spaced increments. In addition, four arbitrarily chosen elements of $\bbeta^0$ take non-zero values, two with $0.5$ and the other two with 1, and  are fixed throughout the simulation. In some settings, the maximum likelihood estimates do not exist due to divergence and  are not shown. The covariance matrix $\bSigma_x$ of $\widetilde{\bx}_i$ takes an autoregressive structure of order 1, i.e. AR(1), with correlation $\rho = 0.7$, or a compound symmetry structure with correlation $\rho=0.7$. The tuning parameter in the $\ell_1$ penalized regression is selected by 10-fold cross-validation, and the tuning parameter for the node-wise lasso estimator $\widetilde{\bTheta}$ is selected using 5-fold cross-validation. Both tuning parameter selection procedures are implemented using  \texttt{glmnet} \citep{friedman2010regularization}. For every $\beta_1^0$ value, we summarize the average bias, empirical coverage probability, empirical standard error and model-based estimated standard error over 200 replications.

Figure~\ref{fig:logit_n1k_ar1} illustrates the simulation results for estimating $\beta_1^0$ under the autoregressive covariance structure, and Figure~\ref{fig:logit_n1k_cs} under the compound symmetry structure. The three methods in comparison behave similarly with only 40 covariates included in the model, and  the MLE yields slightly larger biases. The MLE estimates display much more biases than those obtained  by the other two methods with 100 covariates, and do not always exist due to divergence. When  the MLE estimates do exist, they manifest more variability than the original and refined de-biased lasso estimates,  and are with lower coverage probabilities. {In contrast, our refined de-biased lasso approach  outperforms the MLE because the former utilizes sparse lasso estimates as the initial estimates and is numerically more stable than the latter.}

\begin{figure}
	\centering
	\includegraphics[width=\textwidth]{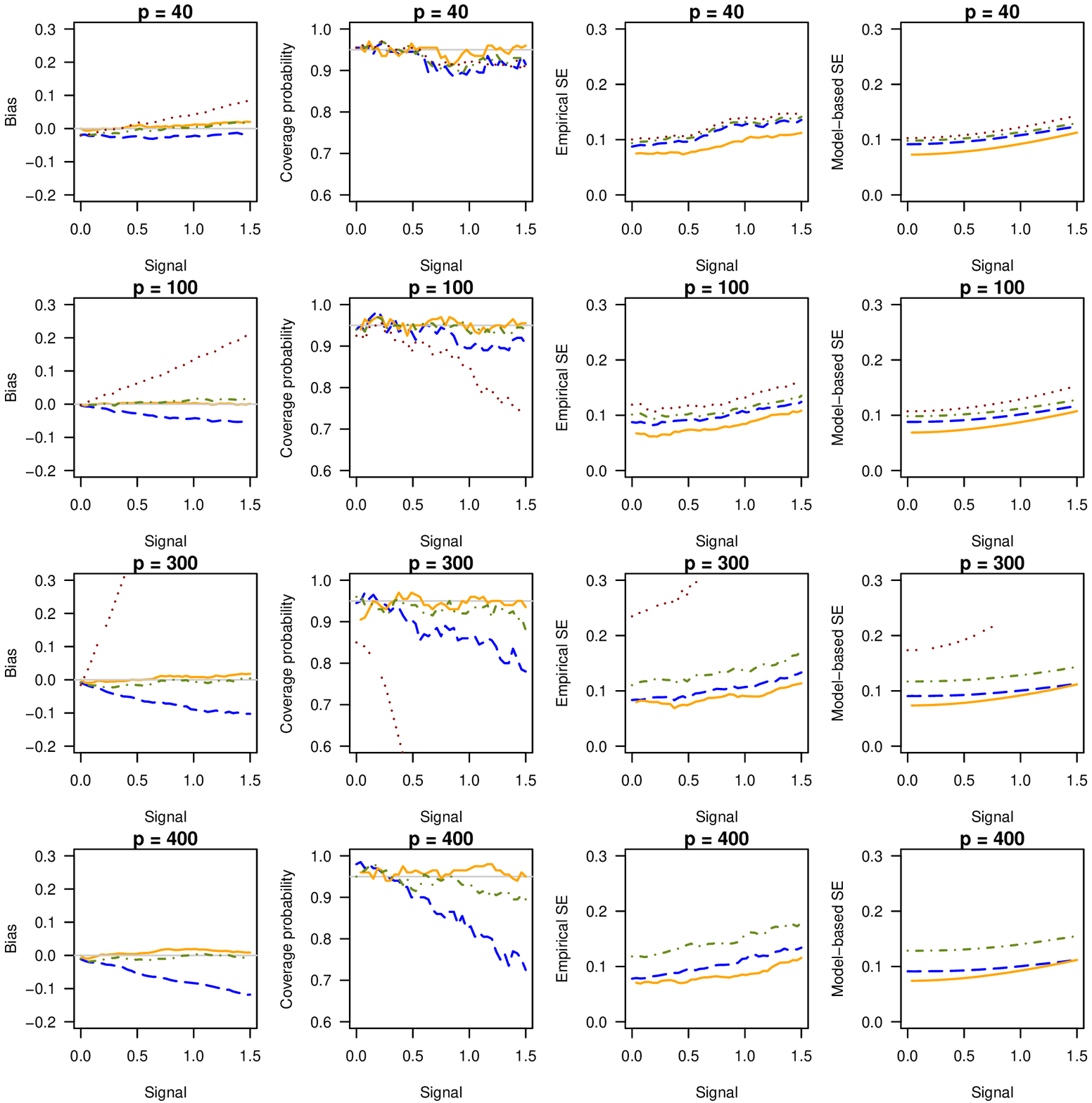}
	\caption{Simulation results: Bias, coverage probability, empirical standard error, and model-based standard error for $\beta_1^0$ in  logistic regression. Covariates are simulated from $N_{p} (\bm{0}, \bSigma_x)$ before being truncated at $\pm 6$, where $\bSigma_x$ has an AR(1) with $\rho=0.7$. The sample size is $n=1,000$ and the number of covariates $p = 40, 100, 300, 400$. The oracle estimator, that is, the maximum likelihood estimator under the true model, is plotted as a reference in orange solid lines. The methods in comparison include our proposed refined de-biased lasso  in olive dot-dash lines, the original de-biased lasso by \citet{van2014asymptotically} in blue dashed lines, and the maximum likelihood estimation in red dotted lines. }
	\label{fig:logit_n1k_ar1}
\end{figure}

\begin{figure}
	\centering
	\includegraphics[width=\textwidth]{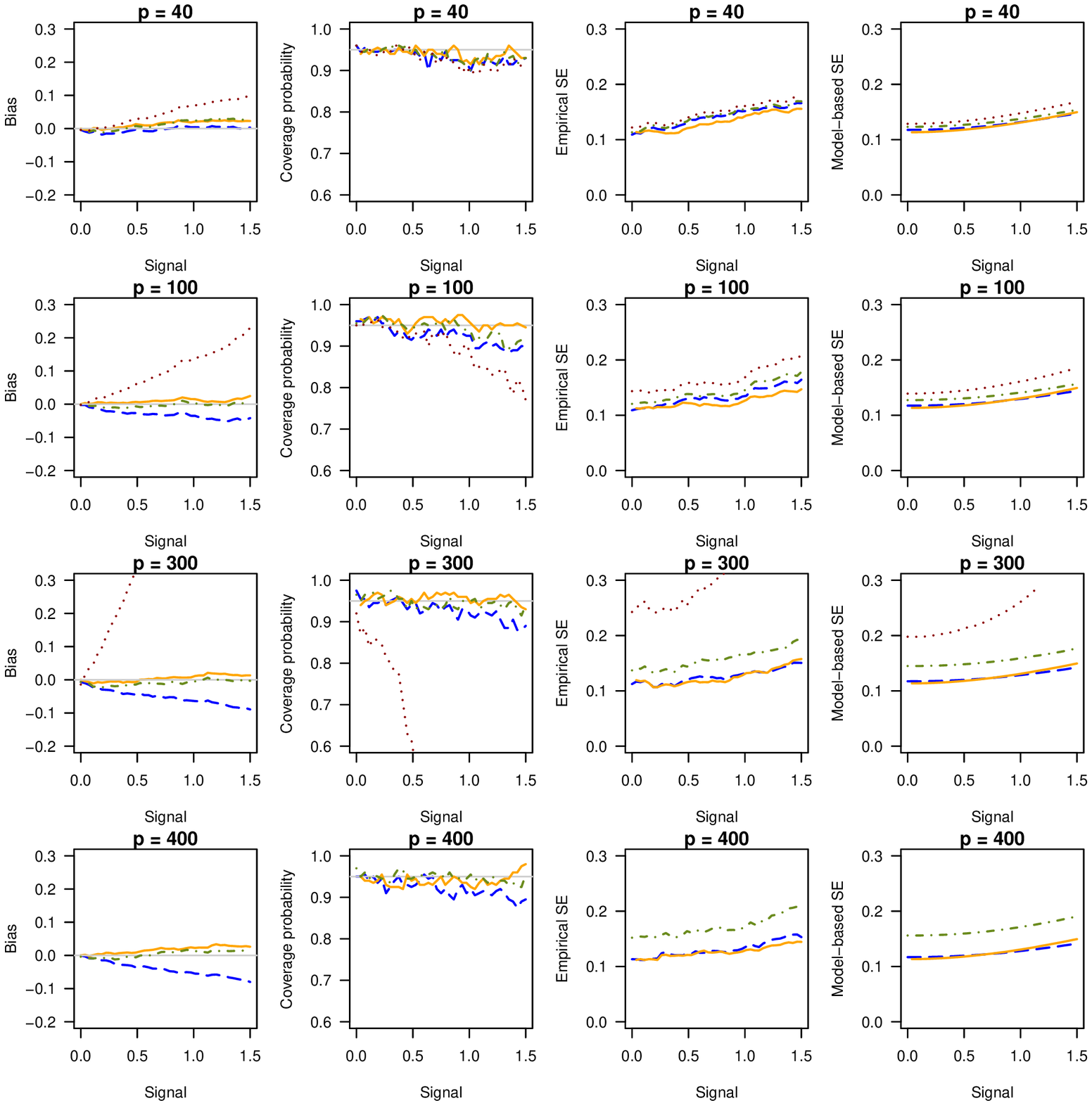}
	\caption{Simulation results: Bias, coverage probability, empirical standard error, and model-based standard error for $\beta_1^0$ in a logistic regression. Covariates are simulated from $N_{p} (\bm{0}, \bSigma_x)$ before being truncated at $\pm 6$, where $\bSigma_x$ has a compound symmetry structure with $\rho=0.7$. The sample size is $n=1,000$ and the number of covariates $p = 40, 100, 300, 400$. The oracle estimator, that is the maximum likelihood estimator under the true model, is plotted as a reference in  orange solid lines. The methods in comparisons include our proposed refined de-biased lasso in olive dot-dash lines, the original de-biased lasso by \citet{van2014asymptotically} in blue dashed lines, and the maximum likelihood estimation in red dotted lines. }
	\label{fig:logit_n1k_cs}
\end{figure}

There are systematic biases in the original de-biased lasso estimator by \citet{van2014asymptotically}, which increase with the magnitude of $\beta_1^0$. 
%even in the simulated ``large $n$, diverging $p$" settings. 
When  signals are non-zero, the model-based standard errors produced  by \citet{van2014asymptotically} slightly underestimate the true variability. These factors contribute to the poor coverage probabilities of  \citet{van2014asymptotically} when the signal size is not zero. In contrast, the refined de-biased lasso estimator gives the smallest biases and has an empirical coverage probability closest to the nominal level across different settings, though with slightly higher variability than \citet{van2014asymptotically}.
This is likely because our proposed de-biased lasso approach does not utilize a penalized estimator of the inverse information matrix.
%\lucmt{Will move this down to the last paragraph as an explanation to the observations}. Under the null $\beta_1^0 = 0$, the two de-biased lasso methods have coverage probabilities close to 95\% and preserve the type 1 error. Additional simulation results with different sample size, number of covariates and correlations among covariates are provided in Web Appendix B.
{We take note that as the refined  de-biased lasso  method needs to invert the Hessian matrix, 
	which could become more ill-conditioned if the dimension increases, its  performance may deteriorate as the dimension of covariates increases.}

As we alluded to in Section \ref{sec:method}, the refined de-biased lasso estimator is  related to \citet{javanmard2014confidence}, and  we have conducted additional simulations to compare them, referred to as ``REF-DS" and ``Tuning" respectively. 
Figure \ref{fig:tuning_large}, which
depicts the results of a logistic regression model with $n=500$ observations and $p=40, 100, 200, 300, 400$ covariates,  shows that %for one coefficient $\xi_j^0 = 1$, 
$\mu_n = 0$ generally performs the best in bias correction and honest confidence interval coverage when $\mu_n$ varies from 0 to 1; see the simulation setup and additional results in Web Appendix B.

%\lu{
%When the number of covariates $p$ is large relative to the sample size $n$, unlike the lasso estimator which produces many zero estimates due to penalization, the MLE estimator is unstable and the Hessian matrix becomes nearly singular. Especially for the presented logistic regression with binary outcomes, a large dimension $p$ would cause complete or nearly complete separation of the outcomes, and hence unbounded MLE estimates. Due to penalized estimation in the inverse information matrix  \citep{van2014asymptotically}, the original de-biased lasso cannot gain sufficient bias correction from $I_j = -\widetilde{\Theta}_j \Pn \dot{\rho}_{\widehat{\xi}}$, and additionally its model-based variance, proportional to $\widetilde{\Theta}_j \widehat{\Sigma}_{\widehat{\xi}} \widetilde{\Theta}_j^T$, underestimates the true variability. On the contrary, our proposed method takes the lasso estimator for an initial value and directly inverts $\widehat{\Sigma}_{\widehat{\xi}}$ for $\hTheta$ without penalization.}

\begin{figure}
	\centering
	\includegraphics[width=0.8\textwidth]{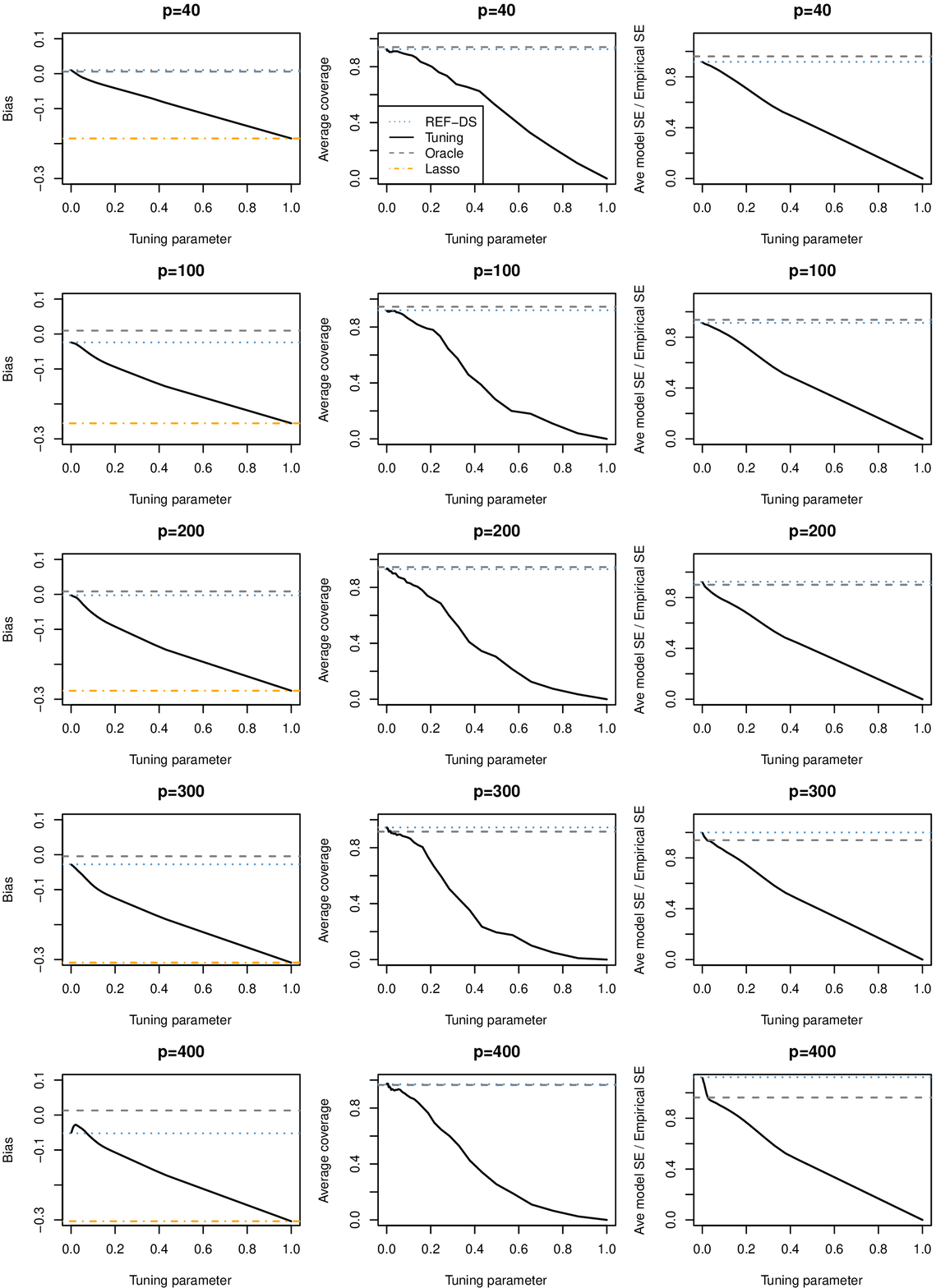}
	\caption{{Simulation results: Bias, coverage probability, ratio between average model-based standard error and empirical standard error in a logistic regression to verify the selection of the tuning parameter $\mu_n=0$ in Eq. \eqref{eq:qp} for $\xi_j^0=1$}. }
	\label{fig:tuning_large}
\end{figure}

\section{Boston lung cancer data analysis \label{sec:app}}

Lung cancer is the top cause of cancer death in the United States. The Boston Lung Cancer Survival Cohort (BLCSC), one of the largest  hospital-based cohorts in the country,  investigates the molecular causes of lung cancer.  Recruited to the study were the lung cancer cases and controls from the Massachusetts General Hospital and the Dana-Farber Cancer Institute since 1992 \citep{miller2002combinations}.
%see  \url{https://maps.cancer.gov/overview/DCCPSGrants/abstract.jsp?applId=9320074&term=CA209414}.
We apply the proposed refined de-biased lasso approach, together with the method by \citet{van2014asymptotically} and the MLE for comparison, to a subset of the BLCSC data and  examine the joint effects of SNPs from nine target genes on the overall risk of lung cancer. 

Genotypes from Axiom array and clinical information were originally available on 1,459 individuals. Out of those individuals, 14 (0.96\%) had missing smoking status, 8 (0.55\%) had missing race information, and 1,386 (95\%) were Caucasian. We include a final number of $n=1,374$ Caucasians with complete data, where $n_0=723$ were controls and $n_1=651$ were cases. Denote the binary disease outcome by $y_i = 1$ for cases and $0$ for controls. Among the 1,077 smokers, 595 had lung cancer, whereas  out of the 297 non-smokers, 56 were cases. Other demographic characteristics, such as education level, gender and age, are summarized in Web Appendix C. Using the target gene approach, we focus on the following lung cancer-related genes: \textit{AK5} on region 1p31.1, \textit{RNASET2} on region 6q27,  \textit{CHRNA2} and \textit{EPHX2} on region 8p21.2, \textit{BRCA2} on region 13q13.1,  \textit{SEMA6D} and \textit{SECISBP2L} on region 15q21.1,  \textit{CHRNA5} on region 15q25.1, and \textit{CYP2A6} on region 19q13.2. These genes may harbor SNPs  associated  with the overall lung cancer risks \citep{mckay2017large}. In our dataset, each SNP is coded as 0,1,2, reflecting the number of copies of the minor allele, and minor alleles are assumed  to have additive effects. After applying filters on the minor allele frequency,  genotype call rate, and excluding highly correlated SNPs, 103 SNPs remain in the model.  Since smoking may modify associations  between lung cancer risks and SNPs, for example, those residing in region 15q25.1  \citep{gabrielsen2013association,amos2008genome}, we conduct analysis stratified by smoking status. Among the smokers and non-smokers, we fit separate logistic regression models, adjusting for education, gender and age. 
%In total, there were 107 variables for stratified analysis among 1,077 smokers and 297 non-smokers. 

We apply  these methods to  draw inference on all of the 107 predictors, two of which are dummy variables for education originally with three levels, no high school, high school and at least 1-2 years of college. Our data analysis may shed light on the molecular mechanism underlying lung cancer. Due to limited space, Table~\ref{tab:smoker} 
lists the estimates for 11 selected SNPs and demographic variables among smokers, and Table~\ref{tab:nonsmoker} for non-smokers. These SNPs are listed as they are significant based on at least  one of the three methods among either the smokers or the non-smokers.  Details of the other SNPs are omitted.
Since the number of the non-smokers is only about one third of the smokers, the MLE has the largest standard errors and tends to break down among the non-smokers; see, for example, AX-62479186 in Table \ref{tab:nonsmoker},
whereas the two de-biased lasso methods give more stable estimates. The estimates by our proposed refined de-biased lasso method (REF-DS) and the method by \citet{van2014asymptotically} (ORIG-DS) share more similarities in the smokers in Table~\ref{tab:smoker} than in the non-smokers in Table~\ref{tab:nonsmoker}.  Overall, the method by \citet{van2014asymptotically} has slightly narrower confidence intervals than our proposed de-biased lasso estimator due to the penalized estimation for ${\bTheta}_{\bxi^0}$.   These results generally agree with our  simulation studies. 

For some SNPs,  our proposed method and the method by \citet{van2014asymptotically} yield estimates with  opposite directions; see  AX-38419741 and AX-15934253 in Table~\ref{tab:smoker},
and AX-42391645 in Table~\ref{tab:nonsmoker}.
Among the non-smokers, the 95\% confidence interval for AX-31620127 in \textit{SEMA6D} by our proposed method is all positive and excludes 0, while the confidence interval by the method of \citet{van2014asymptotically} includes 0; the directions for AX-88907114 in \textit{CYP2A6} are the opposite in Table~\ref{tab:nonsmoker}.
\textit{CHRNA5} is a gene  known  for predisposition to nicotine dependence \citep{hallden2016gene,amos2008genome,gabrielsen2013association}. % deleted hung2008susceptibility,  thorgeirsson2008variant,
Though AX-39952685 and  AX-88891100 in \textit{CHRNA5} are not significant at level 0.05 in marginal analysis among the smokers, their 95\% confidence intervals in Table~\ref{tab:smoker}
exclude 0 by all of the three methods.  Indeed, AX-88891100, or rs503464, mapped to the same physical location in the dbSNP database, %, see \url{https://www.ncbi.nlm.nih.gov/snp/}, 
was found to ``decrease \textit{CHRNA5} promoter-derived luciferase activity" \citep{doyle2011vitro}. The same SNP was also reported to be significantly associated with nicotine dependence at baseline, as well as response to varenicline, bupropion, nicotine replacement therapy for smoking cessation \citep{pintarelli2017pharmacogenetic}.  AX-39952685 was found to be strongly correlated with SNP AX-39952697 in \textit{CHRNA5}, which was mapped to the same physical location as rs11633585 in dbSNP. All of these markers  were found to be significantly associated with nicotine dependence \citep{stevens2008nicotinic}. The stratified analysis also suggests molecular mechanisms of lung cancer differ between smokers and non-smokers, though additional confirmatory studies are needed. 
%In summary, jointly modeling the genetic effects on lung cancer risks can help understand underlying mechanisms and personalized therapies, which necessitates the use of reliable inference tools.

%\begin{landscape}
\begin{sidewaystable}
	\centering
	\caption{Estimated coefficients for demographic variables and eleven SNPs in a logistic regression model among the smokers. The other SNPs are omitted from the table} 		\label{tab:smoker}
	\small
	\begin{tabular}{cccccccccccc}
		\hline
		&  & &   \multicolumn{3}{c}{\textit{REF-DS}\footnote{The proposed refined de-biased lasso based on the inverted Hessian matrix}} & \multicolumn{3}{c}{\textit{ORIG-DS}\footnote{The original de-biased lasso based on the node-wise lasso estimator for $\Thetabeta$ by \citet{van2014asymptotically}}} & \multicolumn{3}{c}{\textit{MLE}\footnote{The maximum likelihood estimation approach}}  \\
		%\cmidrule(lr){4-6}  \cmidrule(lr){7-9}  \cmidrule(l){10-12} 
		\multicolumn{3}{r}{Demographic variable} & Est\footnote{The point estimate for each coefficient} & SE\footnote{The model-based standard error} & 95\% CI\footnote{Confidence interval}  & Est & SE & 95\% CI & Est & SE & 95\% CI   \\ 
		\hline
		\multicolumn{3}{r}{Education: No high school} & 0.44 & 0.20 & (0.05, 0.83) & 0.48 & 0.19 & (0.11, 0.85) & 0.52 & 0.22 & (0.09, 0.94) \\ 
		\multicolumn{3}{r}{Education: High school graduate} & 0.00 & 0.15 & (-0.29, 0.28) & -0.02 & 0.14 & (-0.29, 0.25) & -0.01 & 0.16 & (-0.32, 0.30) \\ 
		\multicolumn{3}{r}{Gender: Male} & -0.14 & 0.13 & (-0.40, 0.12) & -0.16 & 0.13 & (-0.40, 0.09) & -0.15 & 0.14 & (-0.43, 0.13) \\ 
		\multicolumn{3}{r}{Age in years} & 0.04 & 0.07 & (-0.09, 0.17) & 0.05 & 0.06 & (-0.07, 0.17) & 0.05 & 0.07 & (-0.09, 0.19) \\ 
		& & & & & & & & & & & \\ 
		\hline
		SNP & Pos\footnote{The physical position of a SNP on a chromosome based on Assembly GRCh37/hg19} & Gene & Est & SE & 95\% CI  & Est & SE & 95\% CI & Est & SE & 95\% CI   \\ 
		AX-15319183 & 6:167352075 &  \textit{RNASET2} & 0.01 & 0.19 & (-0.36, 0.39) & -0.03 & 0.18 & (-0.39, 0.33) & 0.02 & 0.20 & (-0.38, 0.42) \\ 
		AX-41911849 & 6:167360724 &  \textit{RNASET2} & 0.43 & 0.22 & (0.00, 0.86) & 0.44 & 0.20 & (0.06, 0.83) & 0.49 & 0.24 & (0.03, 0.96) \\ 
		AX-42391645 & 8:27319769 &  \textit{CHRNA2} & 0.01 & 0.16 & (-0.29, 0.32) & -0.01 & 0.14 & (-0.28, 0.26) & 0.01 & 0.16 & (-0.31, 0.34) \\ 
		\textbf{AX-38419741} & 8:27319847  & \textit{CHRNA2} & \textbf{0.11} & 0.35 & (-0.59, 0.80) & \textbf{-0.14} & 0.31 & (-0.75, 0.48) & 0.13 & 0.37 & (-0.60, 0.86) \\ 
		\textbf{AX-15934253} & 8:27334098  & \textit{CHRNA2} & \textbf{-0.15} & 0.44 & (-1.02, 0.71) & \textbf{0.06} & 0.39 & (-0.70, 0.82) & -0.19 & 0.47 & (-1.12, 0.74) \\ 
		AX-12672764 & 13:32927894 & \textit{BRCA2} & -0.07 & 0.19 & (-0.44, 0.31) & -0.10 & 0.16 & (-0.40, 0.21) & -0.07 & 0.20 & (-0.47, 0.33) \\ 
		AX-31620127 & 15:48016563  & \textit{SEMA6D} & 0.79 & 0.26 & (0.28, 1.31) & 0.79 & 0.25 & (0.30, 1.28) & 0.96 & 0.30 & (0.37, 1.55) \\ 
		\textbf{AX-88891100} & 15:78857896  & \textit{CHRNA5} & 0.87 & 0.36 & (0.17, 1.57) & 0.79 & 0.32 & (0.16, 1.41) & 0.98 & 0.39 & (0.22, 1.74) \\ 
		\textbf{AX-39952685} & 15:78867042  & \textit{CHRNA5} & 0.99 & 0.47 & (0.07, 1.91) & 0.82 & 0.38 & (0.09, 1.56) & 1.11 & 0.50 & (0.13, 2.08) \\ 
		AX-62479186 & 15:78878565  & \textit{CHRNA5} & 0.41 & 0.41 & (-0.40, 1.22) & 0.46 & 0.39 & (-0.31, 1.23) & 0.46 & 0.45 & (-0.41, 1.33) \\ 
		AX-88907114 & 19:41353727  & \textit{CYP2A6} & 0.52 & 0.34 & (-0.16, 1.19) & 0.49 & 0.33 & (-0.15, 1.13) & 0.58 & 0.38 & (-0.15, 1.32) \\ 
		$\vdots$ & & & & & & & & & & &  \\
		\hline
	\end{tabular}
\end{sidewaystable}
%\end{landscape}

%\begin{landscape}
\begin{sidewaystable}
	\centering
	\caption{Estimated coefficients for demographic variables and eleven SNPs in a logistic regression model among the non-smokers. The other SNPs are omitted from the table} 		\label{tab:nonsmoker}
	\small
	\begin{tabular}{cccccccccccc}
		\hline
		&  & &   \multicolumn{3}{c}{\textit{REF-DS}\footnote{The proposed refined de-biased lasso based on the inverted Hessian matrix}} & \multicolumn{3}{c}{\textit{ORIG-DS}\footnote{The original de-biased lasso based on the node-wise lasso estimator for $\Thetabeta$ by \citet{van2014asymptotically}}} & \multicolumn{3}{c}{\textit{MLE}\footnote{The maximum likelihood estimation approach}}  \\
		%\cline{5-7}  \cline{8-10}  \cline{11-13} 
		\multicolumn{3}{r}{Demographic variable} & Est\footnote{The point estimate for each coefficient} & SE\footnote{The model-based standard error} & 95\% CI\footnote{Confidence interval}  & Est & SE & 95\% CI & Est & SE & 95\% CI   \\ 
		\hline
		\multicolumn{3}{r}{Education: No high school} & -0.84 & 0.93 & (-2.67, 0.99) & -0.58 & 0.78 & (-2.11, 0.95) & -10.53 & 3.82 & (-18.01, -3.04) \\ 
		\multicolumn{3}{r}{Education: High school graduate} & -1.68 & 0.52 & (-2.69, -0.66) & -1.56 & 0.43 & (-2.39, -0.72) & -11.23 & 3.75 & (-18.58, -3.88) \\ 
		\multicolumn{3}{r}{Gender: Male} & -0.30 & 0.41 & (-1.10, 0.51) & -0.16 & 0.32 & (-0.78, 0.46) & -1.94 & 1.03 & (-3.96, 0.09) \\ 
		\multicolumn{3}{r}{Age in years} & -0.52 & 0.20 & (-0.91, -0.13) & -0.56 & 0.16 & (-0.87, -0.26) & -2.59 & 0.98 & (-4.51, -0.67) \\ 
		&  &  &  &  &  &  &  &  &  &  &  \\ \hline
		SNP & Pos\footnote{The physical position of a SNP on a chromosome based on Assembly GRCh37/hg19}  & Gene & Est & SE & 95\% CI  & Est & SE & 95\% CI & Est & SE & 95\% CI  \\
		AX-15319183 & 6:167352075  & \textit{RNASET2} & -0.71 & 0.55 & (-1.78, 0.36) & 0.01 & 0.40 & (-0.79, 0.80) & -4.32 & 1.84 & (-7.92, -0.71) \\ 
		AX-41911849 & 6:167360724  & \textit{RNASET2} & 0.69 & 0.65 & (-0.59, 1.97) & 0.37 & 0.47 & (-0.55, 1.29) & 4.46 & 2.00 & (0.54, 8.39) \\ 
		\textbf{AX-42391645} & 8:27319769  & \textit{CHRNA2} & \textbf{-0.11} & 0.49 & (-1.07, 0.85) & \textbf{0.18} & 0.30 & (-0.41, 0.78) & -1.90 & 2.00 & (-5.81, 2.02) \\ 
		AX-38419741 & 8:27319847  & \textit{CHRNA2} & 0.50 & 1.04 & (-1.54, 2.53) & 0.23 & 0.61 & (-0.97, 1.42) & 3.37 & 3.00 & (-2.51, 9.26) \\ 
		AX-15934253 & 8:27334098  & \textit{CHRNA2} & 0.11 & 1.40 & (-2.64, 2.86) & 0.38 & 0.82 & (-1.23, 1.98) & 5.37 & 4.21 & (-2.88, 13.62) \\ 
		AX-12672764 & 13:32927894  & \textit{BRCA2} & -0.83 & 0.62 & (-2.04, 0.37) & -0.57 & 0.38 & (-1.32, 0.18) & -8.25 & 2.64 & (-13.42, -3.08) \\ 
		\textbf{AX-31620127} & 15:48016563  & \textit{SEMA6D} & 1.77 & 0.75 & \textbf{(0.30, 3.24)} & 0.43 & 0.46 & \textbf{(-0.48, 1.34)} & 9.23 & 3.27 & (2.81, 15.64) \\ 
		AX-88891100 & 15:78857896  & \textit{CHRNA5} & 0.78 & 1.18 & (-1.54, 3.10) & 1.15 & 0.87 & (-0.56, 2.85) & 1.54 & 3.17 & (-4.68, 7.75) \\ 
		AX-39952685 & 15:78867042  & \textit{CHRNA5} & -0.54 & 1.30 & (-3.09, 2.01) & -0.99 & 0.73 & (-2.41, 0.44) & -2.85 & 3.98 & (-10.65, 4.96) \\ 
		\textbf{AX-62479186} & 15:78878565  & \textit{CHRNA5} & -1.28 & 1.34 & (-3.92, 1.35) & -1.33 & 1.10 & (-3.49, 0.82) & \textbf{-19.64} & \textbf{3410.98} & (-6705.04, 6665.75) \\ 
		\textbf{AX-88907114} & 19:41353727  & \textit{CYP2A6} & 0.86 & 0.88 & \textbf{(-0.86, 2.59)} & 1.40 & 0.68 & \textbf{(0.06, 2.74)} & 3.52 & 2.18 & (-0.75, 7.78) \\ 
		$\vdots$ & & & & & & & & & & &  \\
		\hline
	\end{tabular}
\end{sidewaystable}
%\end{landscape}

\section{Concluding remarks \label{sec:discuss}}

We have proposed a refined de-biased lasso estimating method for GLMs by directly   inverting Hessian matrices in the ``large $n$, diverging $p$" framework. We have showed that if $p^2/n = o(1)$ and $({p/n})^{1/2} s_0 \log(p) = o(1)$, along with some other mild conditions,  any linear combinations of the resulting estimates are asymptotically normal and can be used for constructing hypothesis tests and confidence intervals.  By way of empirical studies, we have showed that when $p$ is  small relative to $n$, the proposed refined de-biased lasso yields estimates nearly identical to the MLE and the original de-biased lasso by \citet{van2014asymptotically}. In contrast,  the proposed method outperforms the latter two in bias correction and confidence interval coverage probabilities when $p < n$ but  $p$ is relatively large, indicating a broad applicability. 

{Additional simulations for linear regression models (results not shown here) indicate that, however, both our proposed method (equivalent to the MLE, see Remark 2) and the original de-biased lasso method perform well with no obvious difference between these two methods for wide ranges of $p/n$. This is likely due to the fact that the Hessian matrix for a linear model is free of regression parameters.}

{Theorem \ref{thm:main} gives some sufficient range of $p$ relative to $n$ to guide practical settings, but does not necessarily exhaust all possible working scenarios in a finite sample setting. In fact, we have shown through simulations that the asymptotic approximations given in Theorem \ref{thm:main} work well in finite sample settings with wide ranges of $p$ and $n$.} Nevertheless, searching for more relaxed conditions of $p$ and $n$ warrants more in-depth investigations.

{With a slightly stronger requirement of $s_0 \log(p) (p/n)^{1/2} \to 0$ than $s_0 \log(p) /\sqrt{n} \to 0$ specified in \citet{van2014asymptotically}, Theorem 1 obtains stronger results than theirs in i)  drawing  inference for any linear combinations of regression coefficients, ii) releasing sparsity assumptions on $\Thetabeta$, and iii) dropping the  boundedness assumption on $\| {\bTheta}_{\bxi^0} \bx_i \|_{\infty}$; 
	%Should $\balpha_n = \mathbf{e}_j$ and $\| \widehat{\bTheta} \bx_i \|_{\infty}$ hold, we would also only require $s_0 \log(p) / \sqrt{n} \to 0$.
	see Web Appendix D for  detailed discussion. {Moreover, a referee pointed out a recent work on linear regression models \citep{bellec2018slope}  that may help provide slightly less stringent sparsity conditions by relaxing the logarithmic factor; however, such generalization to GLMs is beyond our scope.}
}

%Our proposed  estimator is closely connected to \citet{javanmard2014confidence}. Consider a general extension of the method by \citet{javanmard2014confidence} in linear regression to GLMs by  solving the quadratic optimization problems
%\begin{equation} \label{eq:qp}
%\min \{ m ^T \widehat{\Sigma}_{\widehat{\bxi}} ~ m : m \in \mR^{p+1}, \| \widehat{\Sigma}_{\widehat{\bxi}} ~ m - e_j \|_{\infty} \le \mu_n \}
%\end{equation}
%for $j = 1, \ldots, p+1$ and $\mu_n \ge 0$. Then the inverse information matrix can be approximated by a $(p+1) \times (p+1)$ matrix with solutions to these quadratic optimization problems being in the rows. Taking $\widehat{\Theta} = \widehat{\Sigma}_{\widehat{\bxi}}^{-1}$ is a special case with $\mu_n = 0$. In fact, we conducted some simulations using this extension, where for a wide range of $p/n$, we varied $\mu_n$ from 0 to 1 and explored the performance of the resulting de-biased lasso estimator from (\ref{eq:qp}). It turned out that $\mu_n = 0$ yielded the smallest empirical biases and best confidence interval coverage probabilities; the empirical biases increased and the coverage probabilities decreased as $\mu_n$ moved towards 1. Such observations motivated us to directly investigate properties of the proposed $\widehat{b}$ with $\widehat{\Theta} = \widehat{\Sigma}_{\widehat{\bxi}}^{-1}$. 

Lastly, we comment on the difficulties of applying some existing methods to draw inference with high-dimensional GLMs. With extensive simulations, we have discovered unsatisfactory bias correction and confidence interval coverage with the original de-biased lasso in GLM settings \citep{van2014asymptotically}; for example, see  the simulation results under the ``large $p$, small $n$" scenario in Web Appendix B.  Our further investigation pinpoints an essential assumption that hardly holds for GLMs in general, which is that the number of non-zero elements in the rows of the high-dimensional inverse information matrix $\bTheta_{\bxi^0}$ is sparse and of order $o [ \{ n/\log(p) \}^{1/2} ]$  \citep{van2014asymptotically}. The theoretical developments in \citet{van2014asymptotically} rely heavily on this sparse matrix assumption. The $\ell_0$ sparsity conditions on high-dimensional matrices are not uncommon in the literature of high-dimensional inference. A related $\ell_0$ sparsity condition on $\bm{w}^* = {\bI^{* -1}}_{\mathbf{\gamma}\mathbf{\gamma}} \bI^*_{\mathbf{\gamma} \theta}$ can be found in \citet{ning2017general}, where $\bI^*$ is the information matrix under the truth, but it is not well justified in a general setting for GLMs. When testing a global null hypothesis $\bbeta^0= {0}$,  the sparsity of $\bTheta_{\bxi^0}$ reduces to the sparsity of the covariate precision matrix, which becomes less of an issue \citep{ma2020global}. Therefore, we generally do not recommend the de-biased lasso method  when $p>n$ for GLMs.
%Since we do not rely on such conditions as the matrix sparsity on $\Thetabeta$ and the boundedness of $\| \Thetabeta x_i \|_{\infty}$ and that we focus on inference for any linear combinations of the regression parameters, our main result Theorem 1 is not directly comparable to others in the ``large $p$, small $n$" case. 

%  This section is optional.  Here is where you will want to cite
%  grants, people who helped with the paper, etc.  But keep it short!

\section*{Acknowledgements}
This work was partially supported by the United States National Institutes of Health (R01AG056764, R01CA249096 and U01CA209414), and the National Science Foundation (DMS-1915711). The authors thank Dr. David Christiani for providing the Boston Lung Cancer Survival Cohort data.

\section*{Data Availability Statement}
The Boston Lung Cancer Survival Cohort data are not publicly available due to access restrictions. 
%  Not included in the original version of this paper!

\section*{Supporting Information}
Web Appendices referenced in Sections~\ref{sec:theory}, \ref{sec:sim}, \ref{sec:app} and \ref{sec:discuss} are available at the end of this article.  R code and a simulated example are available at \url{https://github.com/luxia-bios/DebiasedLassoGLMs/}.
\vspace*{-8pt}

%  Here, we create the bibliographic entries manually, following the
%  journal style.  If you use this method or use natbib, PLEASE PAY
%  CAREFUL ATTENTION TO THE BIBLIOGRAPHIC STYLE IN A RECENT ISSUE OF
%  THE JOURNAL AND FOLLOW IT!  Failure to follow stylistic conventions
%  just lengthens the time spend copyediting your paper and hence its
%  position in the publication queue should it be accepted.

%  We greatly prefer that you incorporate the references for your
%  article into the body of the article as we have done here 
%  (you can use natbib or not as you choose) than use BiBTeX,
%  so that your article is self-contained in one file.
%  If you do use BiBTeX, please use the .bst file that comes with 
%  the distribution.

\bibliographystyle{apalike}
\bibliography{paper-ref}

%%%%%%%%%%%%%%%%%%%%%%%%%%%%%
%%%%%%%%%%%% supplementary %%
%%%%%%%%%%%%%%%%%%%%%%%%%%%%%

\newpage

\setcounter{equation}{0}
\renewcommand{\theequation}{S\arabic{equation}}
\setcounter{theorem}{0}
\renewcommand{\thetheorem}{S\arabic{theorem}}
\setcounter{section}{0}
\renewcommand{\thesection}{Web Appendix \Alph{section}}
\setcounter{figure}{0}
\renewcommand{\thefigure}{S\arabic{figure}}
\setcounter{table}{0}
\renewcommand{\thetable}{S\arabic{table}}

\begin{center}
	\Large \bf Supporting Information for ``De-biased Lasso for Generalized Linear Models with A Diverging Number of Covariates"
\end{center}

{
\centering
	\author{Lu Xia\textsuperscript{1}, Bin Nan\textsuperscript{2*}, and Yi Li\textsuperscript{3*}\\
		\small 
		\textsuperscript{1}Department of Biostatistics, University of Washington, Seattle, WA, {xialu@uw.edu}  \\
		\small
		\textsuperscript{2}Department of Statistics, University of Californina, Irvine, CA, {nanb@uci.edu} \\
		\small
		\textsuperscript{3}Department of Biostatistics, University of Michigan, Ann Arbor, MI, {yili@umich.edu} \\
		\small 
		*To whom correspondence should be addressed \\
	}
}

\begin{abstract}
	We present the technical proofs of Theorem 1 in the main text and  the lemmas used for the  theorem in Web Appendix A. Additional simulation results and descriptive statistics for the Boston Lung Cancer Survivor Cohort are provided in Web Appendix B and Web Appendix C, respectively. Web Appendix D entails additional discussion on the difference between sparsity assumptions in our work and \citet{van2014asymptotically}.
\end{abstract}

	%%%%%%%%%% main contents start here %%%%%%%%

\section{Technical proofs}

\subsection{ Lemmas}

We list three lemmas that are used for proving Theorem 1.
Without loss of generality, we denote the dimension of the parameter $\bxi$ by $p$ instead of $(p+1)$ to simplify the notation in the  proofs. Consequently, the matrices such as $\bSigma_{\bxi}$ and $\bTheta_{\bxi}$ are considered as $p\times p$ matrices. This simplification of notation does not affect the following derivations.

Lemma \ref{lemma1}   bounds  the estimation error and the prediction error of the lasso estimator under our assumptions. 
%The proof. Interested readers may refer to \citet{buhlmann2011statistics} and \citet{van2008high} for more details.

\begin{lemma} \label{lemma1}
	Under Assumptions 1--5, we have $\| \widehat{\bxi} - \bxi^0 \|_1 = \OP(s_0\lambda)$ and $\| \bX (\widehat{\bxi} - \bxi^0) \|_2^2 /n = \OP(s_0 \lambda^2)$.
\end{lemma}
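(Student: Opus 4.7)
The plan is to adapt the standard lasso analysis for generalized linear models (as in \citet{van2008high} and Chapter 6 of \citet{buhlmann2011statistics}) to the present ``large $n$, diverging $p$'' regime, exploiting that $p^2/n \to 0$ makes the compatibility-style conditions particularly clean. I would begin with the basic inequality obtained from optimality of $\widehat{\bxi}$ in \eqref{eq:lasso}:
\[
\Pn\rho_{\widehat{\bxi}} - \Pn\rho_{\bxi^0} + \lambda \|\widehat{\bbeta}\|_1 \le \lambda \|\bbeta^0\|_1.
\]
Adding and subtracting $(\widehat{\bxi}-\bxi^0)^T \Pn \dot{\rho}_{\bxi^0}$ and using convexity of the loss yields a bound controlled by $\|\Pn\dot{\rho}_{\bxi^0}\|_\infty$. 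Under Assumptions 1 and 3, each coordinate of $\Pn\dot{\rho}_{\bxi^0}$ is an average of mean-zero, bounded (hence sub-Gaussian) random variables with $|x_{ij}\dot{\rho}(y_i,\bx_i^T\bxi^0)|\le K K_1$, so Hoeffding's inequality with a union bound yields $\|\Pn\dot{\rho}_{\bxi^0}\|_\infty = \OP(\sqrt{\log p/n})$. Choosing $\lambda \asymp \sqrt{\log p/n}$ of sufficiently large constant guarantees that on an event $\mathcal{T}_n$ with probability $1-o(1)$, we have $\|\Pn\dot{\rho}_{\bxi^0}\|_\infty \le \lambda/2$, which forces the usual cone condition $\|(\widehat{\bxi}-\bxi^0)_{S_0^c}\|_1 \le 3\|(\widehat{\bxi}-\bxi^0)_{S_0}\|_1$.

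Next, I would establish a quadratic margin for the excess risk. Define $\mathcal{E}(\bxi) = P(\rho_{\bxi}-\rho_{\bxi^0})$ where $P$ denotes expectation. By a second-order Taylor expansion, $\mathcal{E}(\bxi) = \tfrac12 (\bxi-\bxi^0)^T \bSigma_{\tilde{\bxi}}(\bxi-\bxi^0)$ for some $\tilde{\bxi}$ on the segment between $\bxi^0$ and $\bxi$. Combining Assumptions 2 (bounded eigenvalues of $\bSigma_{\bxi^0}$), 3 (Lipschitz $\ddot{\rho}$ and bounded $\ddot{\rho}$ in a neighborhood) and 4 (bounded $\|\bX\bxi^0\|_\infty$) yields local strong convexity: there exist $\eta,c>0$ such that whenever $\|\bxi-\bxi^0\|_1 \le \eta$,
\[
\mathcal{E}(\bxi) \ge c\,(\bxi-\bxi^0)^T \bSigma_{\bxi^0} (\bxi-\bxi^0).
\]
Then I would transfer this from the population to the empirical level. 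Since $p^2/n \to 0$ and the rows of $\bX$ are sub-Gaussian (Assumption 1 and Assumption 5), matrix concentration (e.g.\ Theorem 5.39 of \citet{vershynin2010introduction}) gives $\|\widehat{\bSigma}_{\bxi^0} - \bSigma_{\bxi^0}\| = \OP(\sqrt{p/n}) = \oP(1)$, so $\widehat{\bSigma}_{\bxi^0}$ inherits the bounded-eigenvalue property and the compatibility/restricted-eigenvalue condition holds uniformly on the cone with constants independent of $n$.

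Combining these pieces in the standard way: on $\mathcal{T}_n$, the basic inequality plus the cone condition gives
\[
\tfrac{1}{n}\|\bX(\widehat{\bxi}-\bxi^0)\|_2^2 + \lambda \|\widehat{\bxi}-\bxi^0\|_1 \le C\lambda\, \|(\widehat{\bxi}-\bxi^0)_{S_0}\|_1,
\]
and using the compatibility-style inequality $\|(\widehat{\bxi}-\bxi^0)_{S_0}\|_1^2 \le C' s_0 \cdot n^{-1}\|\bX(\widehat{\bxi}-\bxi^0)\|_2^2$ leads to the claimed rates $\|\widehat{\bxi}-\bxi^0\|_1 = \OP(s_0\lambda)$ and $n^{-1}\|\bX(\widehat{\bxi}-\bxi^0)\|_2^2 = \OP(s_0\lambda^2)$. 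The main obstacle I anticipate is justifying that the Taylor remainder can be absorbed even when the intercept is unpenalized; this is handled by noting that the cone condition and convexity together confine $\widehat{\bxi}$ to a neighborhood of $\bxi^0$ in a two-step argument (first establishing a crude consistency, then iterating with the quadratic margin), which is a standard trick for GLM lasso analyses and works here because $s_0 \sqrt{\log(p)/n} \to 0$ is implied by the rate conditions in Theorem~\ref{thm:main}.
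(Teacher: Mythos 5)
Your proposal is correct and follows essentially the same route as the paper's proof, which simply compresses the standard Bühlmann--van de Geer GLM lasso analysis into citations: compatibility from Assumption 2 (their Lemma 6.23), transfer to the empirical Hessian via concentration (their Lemma 6.17, using the entrywise $\OP(\sqrt{\log p/n})$ bound where you use the spectral-norm $\OP(\sqrt{p/n})$ bound), and then their Theorem 6.4 for the $\ell_1$ rate, with the prediction bound obtained from the quadratic form $(\widehat{\bxi}-\bxi^0)^T\widehat{\bSigma}_{\bxi^0}(\widehat{\bxi}-\bxi^0)=\OP(s_0\lambda^2)$ and Assumption 4 keeping the weights away from zero. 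The only substantive ingredient you spell out that the paper leaves implicit is the localization/two-step argument for the Taylor remainder, which is handled inside the cited results.
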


\begin{proof}
	
	Because $\lambda_{\mathrm{min}}(\bSigma_{\bxi^0}) > 0$ in Assumption 2, the compatibility condition holds for all index sets $S \subset \{ 1, \ldots, p\}$ by Lemma 6.23 \citep{buhlmann2011statistics} and the fact that the adaptive restricted eigenvalue condition implies the compatibility condition. Exploiting  Hoeffding's concentration inequality, we have $\| \widehat{\bSigma}_{\bxi^0} - \bSigma_{\bxi^0} \|_{\infty} = \OP[ \{log(p)/n\}^{1/2} ]$. Then by Lemma 6.17 of \citet{buhlmann2011statistics}, we have the $\widehat{\bSigma}_{\bxi^0}$-compatibility condition. Finally, the first part of Lemma \ref{lemma1} follows from Theorem 6.4 in \citet{buhlmann2011statistics}.  
	
	For the second claim, \citet{ning2017general} showed that $$(\widehat{\bxi} - \bxi^0)^T \widehat{\bSigma}_{\bxi^0} (\widehat{\bxi} - \bxi^0) = (\widehat{\bxi} - \bxi^0)^T (X^T W_{\xi^0}^2 X/n) (\widehat{\bxi} - \bxi^0)  = \OP(s_0 \lambda^2),$$ then under Assumption 4, {the variance terms in $W_{\xi^0}^2$ are bounded away from 0, and} we obtain the desired result that $\| \bX (\widehat{\bxi} - \bxi^0) \|_2^2 /n = \OP(s_0 \lambda^2)$. 
\end{proof}

Lemma \ref{lemma2} depicts the convergence rate of the inverse Hessian matrix $\hTheta$ to the true inverse information matrix $\Thetabeta$.

\begin{lemma} \label{lemma2}
	Suppose the covariate vectors $x_i, ~ i = 1, \ldots, n$, are independent and identically distributed sub-Gaussian random vectors. Under Assumptions 1--5, if we further assume that $s_0 \lambda \rightarrow 0$ and  $p/n \rightarrow 0$, then $\hTheta$ converges to $\Thetabeta$ such that
	\[
	\| \hTheta - \bTheta_{\bxi^0} \| = \OP \{  ({{p} / {n}})^{1/2} + s_0 \lambda  \}.
	\]
	%	where $L = \| \bSigma_{\bxi^0}^{-\frac{1}{2}} \bx_1 \omega_1(\bxi^0) \|_{\psi_2} = \mathcal{O}(1)$.
\end{lemma}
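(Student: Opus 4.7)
The plan is to exploit the matrix identity
\[
\widehat{\bTheta} - \bTheta_{\bxi^0} \;=\; \widehat{\bTheta}\bigl(\bSigma_{\bxi^0} - \widehat{\bSigma}_{\widehat{\bxi}}\bigr)\bTheta_{\bxi^0},
\]
so that, by submultiplicativity of the spectral norm,
\[
\|\widehat{\bTheta} - \bTheta_{\bxi^0}\| \;\le\; \|\widehat{\bTheta}\|\,\|\widehat{\bSigma}_{\widehat{\bxi}} - \bSigma_{\bxi^0}\|\,\|\bTheta_{\bxi^0}\|.
\]
Assumption 2 already gives $\|\bTheta_{\bxi^0}\| \le 1/c_{\min}$, so it remains to control the middle factor at rate $(p/n)^{1/2} + s_0\lambda$ and to show $\|\widehat{\bTheta}\| = \OP(1)$ (which in particular yields that $\widehat{\bSigma}_{\widehat{\bxi}}$ is invertible with probability tending to one).

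For the middle factor I would split
\[
\widehat{\bSigma}_{\widehat{\bxi}} - \bSigma_{\bxi^0} \;=\; \bigl(\widehat{\bSigma}_{\widehat{\bxi}} - \widehat{\bSigma}_{\bxi^0}\bigr) \;+\; \bigl(\widehat{\bSigma}_{\bxi^0} - \bSigma_{\bxi^0}\bigr),
\]
and handle each piece separately. For the statistical piece $\widehat{\bSigma}_{\bxi^0} - \bSigma_{\bxi^0}$, note that the weighted rows $\omega_i(\bxi^0)\bx_i$ of $\bX_{\bxi^0}$ are sub-Gaussian: the covariate rows are sub-Gaussian by Assumption 1, and the weights $\omega_i(\bxi^0)=\{\ddot\rho(y_i,\bx_i^T\bxi^0)\}^{1/2}$ are deterministically bounded by Assumptions 3 and 4 (since $\|\bX\bxi^0\|_\infty$ is bounded a.s.). Applying a standard sub-Gaussian sample-covariance concentration bound (e.g.\ Theorem 5.39 of \citealp{vershynin2010introduction}) then yields $\|\widehat{\bSigma}_{\bxi^0} - \bSigma_{\bxi^0}\| = \OP((p/n)^{1/2})$, which is the source of the first term in the rate.

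For the approximation piece, writing $D_i := \ddot\rho(y_i,\bx_i^T\widehat{\bxi}) - \ddot\rho(y_i,\bx_i^T\bxi^0)$ and using Assumption 1 together with Lemma \ref{lemma1}, we have $|\bx_i^T(\widehat{\bxi}-\bxi^0)| \le K\|\widehat{\bxi}-\bxi^0\|_1 = \OP(Ks_0\lambda) = \oP(\delta)$, so the Lipschitz clause of Assumption 3 applies on the event of interest, giving $\max_i |D_i| \le c_{Lip} K\|\widehat{\bxi}-\bxi^0\|_1 = \OP(s_0\lambda)$. Then for any unit vector $\bm{v}\in\mR^p$,
\[
\bigl|\bm{v}^T(\widehat{\bSigma}_{\widehat{\bxi}} - \widehat{\bSigma}_{\bxi^0})\bm{v}\bigr|
= \Bigl|\tfrac{1}{n}\sum_{i=1}^n D_i (\bm{v}^T\bx_i)^2\Bigr|
\le \max_i|D_i|\cdot \lambda_{\max}(\bX^T\bX/n).
\]
Assumption 5 combined with the sub-Gaussian concentration for $\bX^T\bX/n$ under $p/n\to0$ shows $\lambda_{\max}(\bX^T\bX/n) = \OP(1)$, so taking the supremum over $\bm{v}$ gives $\|\widehat{\bSigma}_{\widehat{\bxi}} - \widehat{\bSigma}_{\bxi^0}\| = \OP(s_0\lambda)$. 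Summing the two pieces yields $\|\widehat{\bSigma}_{\widehat{\bxi}} - \bSigma_{\bxi^0}\| = \OP((p/n)^{1/2} + s_0\lambda) = \oP(1)$ under the stated rate assumptions.

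Finally, by Weyl's inequality this last bound together with $\lambda_{\min}(\bSigma_{\bxi^0})\ge c_{\min}$ implies $\lambda_{\min}(\widehat{\bSigma}_{\widehat{\bxi}}) \ge c_{\min}/2$ with probability going to one, so $\widehat{\bTheta}=\widehat{\bSigma}_{\widehat{\bxi}}^{-1}$ exists and $\|\widehat{\bTheta}\| = \OP(1)$. Plugging back into the initial inequality completes the proof. I expect the main technical hurdle to be the statistical piece: rigorously verifying that the weighted rows have a uniformly controlled sub-Gaussian norm so that the $(p/n)^{1/2}$ operator-norm concentration applies; the approximation piece is essentially a clean consequence of Assumptions 1 and 3 together with Lemma \ref{lemma1}.
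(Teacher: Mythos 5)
Your proposal is correct and follows essentially the same route as the paper's own proof: the same resolvent identity $\widehat{\bTheta}-\bTheta_{\bxi^0}=\widehat{\bTheta}(\bSigma_{\bxi^0}-\widehat{\bSigma}_{\widehat{\bxi}})\bTheta_{\bxi^0}$, the same split of $\widehat{\bSigma}_{\widehat{\bxi}}-\bSigma_{\bxi^0}$ into a sub-Gaussian covariance-concentration piece of order $(p/n)^{1/2}$ and a Lipschitz/lasso-error piece of order $s_0\lambda$ controlled via $\lambdamax(\bX^T\bX/n)=\OP(1)$, and the same eigenvalue-perturbation argument to get $\|\widehat{\bTheta}\|=\OP(1)$. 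Your explicit check that $|\bx_i^T(\widehat{\bxi}-\bxi^0)|=\oP(\delta)$ before invoking the Lipschitz clause of Assumption 3 is a small but welcome point of care that the paper passes over.
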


\begin{proof}
	Since $\widehat{\bSigma}_{\widehat{\bxi}}^{-1} - \bSigma_{\bxi^0}^{-1} = \widehat{\bSigma}_{\widehat{\bxi}}^{-1} \left(\bSigma_{\bxi^0} -  \widehat{\bSigma}_{\widehat{\bxi}}  \right) \bSigma_{\bxi^0}^{-1}$, we have 
	\begin{equation} \label{eq:inv}
		\| \widehat{\bSigma}_{\widehat{\bxi}}^{-1} - \bSigma_{\bxi^0}^{-1} \| \le  \| \widehat{\bSigma}_{\widehat{\bxi}}^{-1}  \| \cdot \| \widehat{\bSigma}_{\widehat{\bxi}}  - \bSigma_{\bxi^0}  \|  \cdot \| \bSigma_{\bxi^0}^{-1} \|.
	\end{equation}
	By Assumption 2, $\| \bSigma_{\bxi^0}^{-1} \|$ is bounded. We obtain the convergence rate of $\| \widehat{\bSigma}_{\widehat{\bxi}}^{-1} - \bSigma_{\bxi^0}^{-1} \|$ by calculating the convergence rate of $\| \widehat{\bSigma}_{\widehat{\bxi}} - \bSigma_{\bxi^0}  \|$ and showing that $\| \widehat{\bSigma}_{\widehat{\bxi}}^{-1}  \| $ is bounded with probability going to 1.
	
	Note that $\| \widehat{\bSigma}_{\widehat{\bxi}} - \bSigma_{\bxi^0}  \| \le \| \widehat{\bSigma}_{\widehat{\bxi}} - \widehat{\bSigma}_{\bxi^0} \| + \| \widehat{\bSigma}_{\bxi^0} -  \bSigma_{\bxi^0}  \|$. When the rows of $\bX$ are sub-Gaussian, so are the rows of $\bX_{\bxi^0}$ due to the boundedness of the weights $w_i$ in Assumption 3. It  can be shown that $L = \| \bSigma_{\bxi^0}^{-1/2} \bx_1 \omega_1(\bxi^0) \|_{\psi_2} = \mathcal{O}(1)$. First, for $\| \widehat{\bSigma}_{\bxi^0} -  \bSigma_{\bxi^0}  \|$, \citet{vershynin2010introduction} shows that for every $t > 0$, it holds with probability at least  $1 - 2\exp(-c_L^{\prime} t^2)$ that
	\begin{equation}
		\| \widehat{\bSigma}_{\bxi^0} -  \bSigma_{\bxi^0} \| \le \| \bSigma_{\bxi^0}  \| \max(\delta, \delta^2) \le c_{\mathrm{max}} \max(\delta, \delta^2),
	\end{equation}
	where $\delta = C_L (p/n)^{1/2} + {t}/{n}^{1/2}$. Here $C_L$, $c_L^{\prime} > 0$ depend only on  $L = \| \bSigma_{\bxi^0}^{-{1}/{2}} \bx_1 \omega_1(\bxi^0) \|_{\psi_2}$. In fact $c_L^{\prime} = c_1/L^4$ and $C_L = L^2 ({\log 9 / c_1})^{1/2}$, where $c_1$ is an absolute constant. For $s > 0$ and $t = s C_L {p}^{1/2}$, the probability becomes $1 - 2\exp(-c_2 s^2 p)$, $c_2 > 0$ being some absolute constant, and $\delta = (s+1) C_L (p/n)^{1/2}$. Thus $\| \widehat{\bSigma}_{\bxi^0} -  \bSigma_{\bxi^0} \| = \mathcal{O}_P \{  L^2 ({p}/{n})^{1/2} \} = \mathcal{O}_P \{  ({p}/{n})^{1/2} \} $. 
	
	Note that 
	\[
	\begin{array}{rcl}
		\| \widehat{\bSigma}_{\widehat{\bxi}} - \widehat{\bSigma}_{\bxi^0} \| & = & \| {\bX}^T (\bW^2_{\widehat{\bxi}} - \bW^2_{\bxi^0}) \bX / n \|  \\ 
		& \le & \| {\bX}^T \| \cdot \| \bX \| / n \cdot \| \bW^2_{\widehat{\bxi}} - \bW^2_{\bxi^0} \|  \\
		& = & \lambdamax({\bX}^T \bX /n) \cdot  \| \bW^2_{\widehat{\bxi}} - \bW^2_{\bxi^0} \|.
	\end{array}
	\]
	By Assumptions 1 and  3, \begin{equation}
		\begin{array}{rcl}
			\| \bW^2_{\widehat{\bxi}} - \bW^2_{\bxi^0} \| & = & \max_i | \ddot{\rho}(y_i , \bx_i^T \widehat{\bxi}) - \ddot{\rho}(y_i , \bx_i^T \bxi^0)  | \\
			& \le &  c_{Lip}  \cdot \max_i | \bx_i^T (\widehat{\bxi} - \bxi^0)| \\
			& \le &  c_{Lip} K  \cdot \| \widehat{\bxi} - \bxi^0 \|_1.
		\end{array}
	\end{equation}
	By Lemma \ref{lemma1}, we have $\| \widehat{\bxi} - \bxi^0 \|_1 = \OP(s_0 \lambda)$. In this case, $\| \bW^2_{\widehat{\bxi}} - \bW^2_{\bxi^0} \| = \OP(s_0 \lambda)$. By Assumption 5 and \citet{vershynin2010introduction}, $\lambdamax({\bX}^T \bX /n) = \OP(1)$. Thus $\| \widehat{\bSigma}_{\widehat{\bxi}} - \widehat{\bSigma}_{\bxi^0} \| = \OP( s_0 \lambda)$. % = \OP \left( K \sqrt{\displaystyle \frac{p}{n}} \right)$. 
	Therefore, after combining the two parts, we have $ \| \widehat{\bSigma}_{\widehat{\bxi}} - \bSigma_{\bxi^0} \| = \OP \{ L^2 (p/n)^{1/2} +  s_0 \lambda \}$. 
	Under ${{p}/{n}} = o(1)$ and $s_0 \lambda = o(1)$, we have $ \| \widehat{\bSigma}_{\widehat{\bxi}} - \bSigma_{\bxi^0} \| = \oP(1)$.
	
	Now for any vector $\bx$ with $\|\bx\|_2 = 1$, we have
	\[
	\displaystyle \inf_{\| {y}\|_2 = 1} \| \widehat{\bSigma}_{\widehat{\bxi}} {y}  \|_2 \le \| \widehat{\bSigma}_{\widehat{\bxi}} \bx \|_2 \le \| \bSigma_{\bxi^0} \bx \|_2 + \| (\widehat{\bSigma}_{\widehat{\bxi}} - \bSigma_{\bxi^0} ) \bx \|_2 \le \| \bSigma_{\bxi^0} \bx \|_2 + \displaystyle \sup_{\| {z}\|_2 = 1} \| (\widehat{\bSigma}_{\widehat{\bxi}} - \bSigma_{\bxi^0} ) {z} \|_2,
	\]
	which indicates that $\lambdamin(\widehat{\bSigma}_{\widehat{\bxi}}) \le \lambdamin(\bSigma_{\bxi^0}) + \| \widehat{\bSigma}_{\widehat{\bxi}} - \bSigma_{\bxi^0} \|$. Similarly, we have $\lambdamin(\bSigma_{\bxi^0})  \le  \lambdamin(\widehat{\bSigma}_{\widehat{\bxi}}) + \| \widehat{\bSigma}_{\widehat{\bxi}} - \bSigma_{\bxi^0} \|$. So $ | \lambdamin(\bSigma_{\bxi^0}) - \lambdamin(\widehat{\bSigma}_{\widehat{\bxi}}) | \le \| \widehat{\bSigma}_{\widehat{\bxi}} - \bSigma_{\bxi^0} \|$. Thus, for any $0 < \epsilon < \min \{ \| \bSigma_{\bxi^0} \|, \lambdamin( \bSigma_{\bxi^0} )/2 \}$, we have that
	\begin{eqnarray*}
		pr \left( \| \widehat{\bSigma}_{\widehat{\bxi}}^{-1} \| \ge \displaystyle \frac{1}{\lambdamin(\bSigma_{\bxi^0}) - \epsilon} \right) 
		&= & pr (  \lambdamin(\widehat{\bSigma}_{\widehat{\bxi}}) \le \lambdamin(\bSigma_{\bxi^0}) - \epsilon  )  \\
		&\le & pr ( | \lambdamin(\widehat{\bSigma}_{\widehat{\bxi}}) - \lambdamin(\bSigma_{\bxi^0})  | \ge \epsilon ) \\
		&\le & pr ( \| \widehat{\bSigma}_{\widehat{\bxi}} - \bSigma_{\bxi^0} \| \ge \epsilon ).
	\end{eqnarray*}
	Since $ \| \widehat{\bSigma}_{\widehat{\bxi}} - \bSigma_{\bxi^0} \| = \oP(1)$, we have $\| \widehat{\bSigma}_{\widehat{\bxi}}^{-1} \| = \OP(1)$. Finally, by (\ref{eq:inv}), $ \| \widehat{\bSigma}_{\widehat{\bxi}}^{-1} - \bSigma_{\bxi^0}^{-1} \| =  \OP( \| \widehat{\bSigma}_{\widehat{\bxi}} - \bSigma_{\bxi^0} \|) = \OP \{  (p/n)^{1/2} +  s_0 \lambda \}$.
\end{proof}

\begin{lemma} \label{lemma3}
	Under  Assumptions 1--3, when $p/n \rightarrow 0$, it holds that for any vector $\balpha_n \in \mathbb{R}^{p}$ with $\|\balpha_n\|_2 = 1$,
	\[
	\displaystyle \frac{{n}^{1/2} \balpha_n^T \bTheta_{\bxi^0} \Pn \dot{{\rho}}_{\bxi^0}}{({\balpha_n^T\bTheta_{\bxi^0} \balpha_n})^{1/2}}  \to N(0,1)
	\]
	in distribution as $n \to \infty$.
\end{lemma}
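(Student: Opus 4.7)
The plan is to recognize the statistic as a normalized sum of i.i.d.\ mean-zero random variables and apply a triangular-array central limit theorem. Writing $Z_i = \balpha_n^T \bTheta_{\bxi^0} \dot{\rho}_{\bxi^0}(y_i, \bx_i)$ for $i = 1, \ldots, n$, the score identity $\E[\dot{\rho}_{\bxi^0}(y, \bx)] = 0$ at the true parameter gives $\E[Z_i] = 0$, and the numerator in the lemma equals $n^{-1/2} \sum_{i=1}^n Z_i$.

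First I would compute $\mathrm{Var}(Z_i)$. Since $\dot{\rho}_{\bxi^0}(y, \bx) = \{-y + b'(\bx^T \bxi^0)\} \bx$ and the exponential-family mean-variance relationship gives $\mathrm{Var}(y \mid \bx) = b''(\bx^T \bxi^0)$, the information equality yields $\mathrm{Var}(\dot{\rho}_{\bxi^0}(y, \bx)) = \E[b''(\bx^T \bxi^0) \bx \bx^T] = \bSigma_{\bxi^0}$. Hence $\mathrm{Var}(Z_i) = \balpha_n^T \bTheta_{\bxi^0} \bSigma_{\bxi^0} \bTheta_{\bxi^0} \balpha_n = \balpha_n^T \bTheta_{\bxi^0} \balpha_n$, exactly matching the denominator. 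Assumption 2 further guarantees $1/c_{\max} \le \mathrm{Var}(Z_i) \le 1/c_{\min}$ uniformly in $n$ and in the unit vector $\balpha_n$.

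Next I would verify the Lindeberg condition. Assumption 3 provides $|\dot{\rho}(y, \bx^T \bxi^0)| \le K_1$ almost surely, Assumption 1 provides $\|\bx\|_\infty \le K$, and Assumption 2 gives $\|\bTheta_{\bxi^0} \balpha_n\|_2 \le 1/c_{\min}$, so that $\|\bTheta_{\bxi^0} \balpha_n\|_1 \le p^{1/2}/c_{\min}$. These combine to the uniform almost-sure bound $|Z_i| \le K K_1 p^{1/2}/c_{\min}$. For any fixed $\epsilon > 0$, the indicator $\mathbf{1}\{|Z_i| > \epsilon\, n^{1/2} (\balpha_n^T \bTheta_{\bxi^0} \balpha_n)^{1/2}\}$ then vanishes almost surely once $p/n$ is small enough, so the Lindeberg quantity is identically zero for all sufficiently large $n$. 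The Lindeberg--Feller theorem then yields the stated weak convergence.

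The main subtlety is the triangular-array character of the problem: both $\balpha_n$ and $\bTheta_{\bxi^0}$ depend on $n$ through the growing dimension $p$, so constants in the bounds must not blow up. The uniform eigenvalue bounds in Assumption 2, together with the boundedness of the covariates and of the score (Assumptions 1 and 3), prevent this blow-up, and the rate condition $p/n \to 0$ is precisely what is needed to force the worst-case bound $p^{1/2}$ to be negligible compared with $n^{1/2}$. As an alternative to Lindeberg, one could just as well verify Lyapunov's condition at order four, using $\E[Z_i^4] = \mathcal{O}(p)$ and $\mathrm{Var}(Z_i)^2 = \Theta(1)$ to obtain the bound $\mathcal{O}(p/n) \to 0$.
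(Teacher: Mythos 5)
Your proposal is correct and follows essentially the same route as the paper: a Lindeberg--Feller argument for the triangular array $Z_{ni} = n^{-1/2}\balpha_n^T\bTheta_{\bxi^0}\bx_i\dot{\rho}(y_i,\bx_i^T\bxi^0)$, with the variance identified as $\balpha_n^T\bTheta_{\bxi^0}\balpha_n$ via the information equality and the Lindeberg indicators shown to vanish almost surely for large $n$ from the uniform bound $|Z_{ni}|/s_n = \mathcal{O}\{(p/n)^{1/2}\}$ obtained from Assumptions 1--3. Your derivation of the $\mathcal{O}(p^{1/2})$ bound via $\|\bTheta_{\bxi^0}\balpha_n\|_1 \le p^{1/2}/c_{\min}$ is a cosmetic variant of the paper's Cauchy--Schwarz step, and the Lyapunov alternative you mention is a valid but unnecessary addition.
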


\begin{proof}
	We invoke the Lindeberg-Feller Central Limit Theorem. For $i = 1, \ldots, n$, let 
	\[
	Z_{ni} = \displaystyle n^{-1/2} \balpha_n^T \bTheta_{\bxi^0} \dot{{\rho}}_{\bxi^0}(y_i, \bx_i) = \displaystyle n^{-1/2} \balpha_n^T \bTheta_{\bxi^0} \bx_i \dot{\rho}(y_i, \bx_i^T \bxi^0),
	\]
	and $s_n^2 = Var \left( \sum_{i=1}^n Z_{ni} \right)$. Note that $\E  \{ \dot{\rho}(y_i, \bx_i^T \bxi^0) \mid \bx_i \} = 0$ and consequently $\E(Z_{ni})=0$. Because $\{ (y_i, \widetilde{\bx}_i)\}_{i=1}^n$ are independent and identically distributed, we can show that $s_n^2 = \balpha_n^T \bTheta_{\bxi^0} \balpha_n$.
	To show ${\sum_{i=1}^n Z_{ni}} / {s_n} \to N(0,1)$ in distribution, we first check the Lindeberg condition and then the conclusion shall follow by the Lindeberg-Feller Central Limit Theorem. Specifically, for any $\epsilon > 0$, we show that as $n \rightarrow \infty$,
	\[
	\displaystyle \frac{1}{s_n^2} \sum_{i=1}^n \E \left\{ Z_{ni}^2 \cdot {1}_{(|Z_{ni}| > \epsilon s_n)} \right\} \rightarrow 0.
	\]
	Due to the boundedness of the eigenvalues of $\bSigma_{\bxi^0}$, $\balpha_n ^T \bTheta_{\bxi^0} \balpha_n \ge \lambdamin(\bTheta_{\bxi^0}) = 1/\lambdamax(\bSigma_{\bxi^0}) \ge c_{\mathrm{max}}^{-1}$. On the other hand, by the  Cauchy-Schwarz inequality, it holds almost surely that
	\[
	\left(  \balpha_n^T \bTheta_{\bxi^0} \bx_i  \right)^2  \le  \| \balpha_n \|_2^2 \cdot \| \bTheta_{\bxi^0} \bx_i \|_2^2 
	\le  \left[   \| \bTheta_{\bxi^0} \| \cdot \|\bx_i\|_2  \right]^2 
	\le  c_{\mathrm{min}}^{-2} \cdot \mathcal{O}(pK^2).
	\]
	Inside the indicator, it holds almost surely that
	\[
	\begin{array}{rcl}
		\displaystyle \frac{Z_{ni}^2}{s_n^2} & = & \displaystyle \frac{[\dot{\rho}(y_i, \bx_i^T \bxi_0)]^2 \left(  \balpha_n^T \bTheta_{\bxi^0} \bx_i  \right)^2}{n \balpha_n ^T \bTheta_{\bxi^0} \balpha_n} \\
		& \le & [\dot{\rho}(y_i, \bx_i^T \bxi_0)]^2 \cdot c_{\mathrm{min}}^{-2} c_{\mathrm{max}} \cdot \mathcal{O}(K^2 \displaystyle \frac{p}{n}) \\
		& \le & K_1^2 c_{\mathrm{min}}^{-2} c_{\mathrm{max}} \cdot \mathcal{O}(K^2 \displaystyle \frac{p}{n}),
	\end{array}
	\]
	where the last inequality follows from the boundedness of $\dot{\rho}(y_i, \bx_i^T \bxi_0)$ in Assumption 3.  Hence, we have $Z_{ni}^2 / s_n^2 \rightarrow 0$ almost surely as $p/n \rightarrow 0$. When $n$ is large enough, $Z_{ni}^2 / s_n^2 < \epsilon^2$ and all the indicators become 0. Therefore, the Lindeberg condition holds and the Lindeber-Feller Central Limit Theorem guarantees the asymptotic normality.  
\end{proof}

\subsection{Proof of Theorem 1}

{The invertibility of $\widehat{\bSigma}_{\widehat{\bxi}}$ is shown in the proof of Lemma 2.  Now} with the bias decomposition Eq. (6) in the main text, 
\[
n^{1/2} \balpha_n^T (\widehat{\bb} - \bxi^0) - {n}^{1/2} \balpha_n^T \hTheta\bDelta = - {n}^{1/2} \balpha_n^T \hTheta \Pn \dot{{\rho}}_{\bxi^0},
\]
we first show that $\balpha_n^T \hTheta \balpha_n - \balpha_n^T {\bTheta}_{\bxi^0} \balpha_n = o_{P}(1)$ and that $$ {{n}^{1/2}\balpha_n^T \hTheta \Pn \dot{{\rho}}_{\bxi^0}} / {({\balpha_n^T \hTheta \balpha_n})^{1/2}} = {{n}^{1/2}\balpha_n^T {\bTheta}_{\bxi^0} \Pn \dot{{\rho}}_{\bxi^0}} / {({\balpha_n^T {\bTheta}_{\bxi^0} \balpha_n})^{1/2}} + o_{P}(1),$$ Then by Slutsky's Theorem, the asymptotic distribution of the target ${{n}^{1/2}\balpha_n^T \hTheta \Pn \dot{{\rho}}_{\bxi^0}} / {({\balpha_n^T \hTheta \balpha_n})^{1/2}}$ can be derived by using the asymptotic distribution of  ${{n}^{1/2} \balpha_n^T {\bTheta}_{\bxi^0} \Pn \dot{{\rho}}_{\bxi^0}} / {({\balpha_n^T {\bTheta}_{\bxi^0} \balpha_n})^{1/2}}$, which has been proved in Lemma \ref{lemma3}. In the final step, as long as ${n}^{1/2} \balpha_n^T \hTheta \bDelta = \oP(1)$, the asymptotic distribution of ${{n}^{1/2} \balpha_n^T(\widehat{\bb} - \bxi^0)} / {({\balpha_n^T \hTheta \balpha_n})^{1/2}}$ follows immediately.

According to Lemma \ref{lemma2}, it follows that 
\[
\vert \balpha_n^T \hTheta \balpha_n - \balpha_n^T {\bTheta}_{\bxi^0} \balpha_n \vert = \vert \balpha_n^T  (\hTheta - {\bTheta}_{\bxi^0} ) \balpha_n \vert \le \| \hTheta - {\bTheta}_{\bxi^0} \| \cdot \| \balpha_n \|_2^2 =   \oP(1).
\]
By the Cauchy-Schwartz inequality,
\[
\begin{array}{rcl}
	{n}^{1/2} \vert \balpha_n^T \hTheta \Pn \dot{{\rho}}_{\bxi^0} - \balpha_n^T {\bTheta}_{\bxi^0} \Pn \dot{{\rho}}_{\bxi^0} \vert & \le & {n}^{1/2} \| \balpha_n \|_2 \cdot \| (\hTheta - \bTheta_{\bxi^0}) \Pn \dot{{\rho}}_{\bxi^0} \|_2 \\
	& \le & {n}^{1/2} \| \hTheta - \bTheta_{\bxi^0} \| \cdot \| \Pn \dot{{\rho}}_{\bxi^0} \|_2,
\end{array}
\]
then we have 
\[ \begin{array}{rcl}
	{n}^{1/2} | \balpha_n^T \hTheta \Pn \dot{{\rho}}_{\bxi^0} - \balpha_n^T {\bTheta}_{\bxi^0} \Pn \dot{{\rho}}_{\bxi^0} | & \le & {n}^{1/2}  \cdot \|  \Pn \dot{{\rho}}_{\bxi^0} \|_{2} \cdot \OP \{ \displaystyle  ({p}/{n})^{1/2} + s_0 \lambda \}. \\
\end{array}
\]
By definition, 
\[
\begin{array}{rcl}
	\| \Pn \dot{{\rho}}_{\bxi^0} \|_{2}^2  & = & \displaystyle \sum_{j=1}^p \{ n^{-1} \sum_{i=1}^n x_{ij} \dot{\rho} (y_i, x_i^T \xi^0) \}^2 \\
	& = & \displaystyle n^{-2} \sum_{j=1}^p \sum_{i=1}^n \sum_{k=1}^n x_{ij} x_{kj} \dot{\rho}(y_i, x_i^T \xi^0) \dot{\rho}(y_k, x_k^T \xi^0).
\end{array}
\]
With independent observations  
and $\E\{ x_{ij} \dot{\rho}(y_i, x_i^T \xi^0) \} = 0$ for any $i$, it follows that
\[
\E \| \Pn \dot{\rho}_{\xi^0} \|^2_2 = \frac{1}{n^2} \sum_{j=1}^p \sum_{i=1}^n \E \{ x_{ij}^2 \dot{\rho}^2 (y_i, x_i^T \xi^0) \}. 
\]
By Assumptions 1 and  3, we have $| x_{ij} \dot{\rho}(y_i, \bx_i^T \bxi^0) | \le K K_1$ almost surely holds for all $i$ and $j$, so $\E \| \Pn \dot{\rho}_{\xi^0} \|^2_2 = \mathcal{O}(p/n)$. This implies that $\| \Pn \dot{\rho}_{\xi^0} \|_2 = \OP\{ ({p/n})^{1/2} \}$.
Then we have 
\[
{n}^{1/2} | \balpha_n^T \hTheta \Pn \dot{{\rho}}_{\bxi^0} - \balpha_n^T {\bTheta}_{\bxi^0} \Pn \dot{{\rho}}_{\bxi^0} | \le 
\OP \left(  p/{n}^{1/2} +  s_0 \lambda {p}^{1/2} \right),
\]
which is $\oP(1)$ by our assumption in Theorem 1.

Finally, we prove $| {n}^{1/2} \balpha_n^T \hTheta \bDelta| = \oP(1)$. By the Cauchy-Schwartz inequality, $| {n}^{1/2} \balpha_n^T \hTheta \bDelta| \le {n}^{1/2} \| \hTheta \bDelta \|_2$, we only need that ${n}^{1/2} \| \hTheta \bDelta \|_2 = \oP(1)$. Note that
\[
\Delta_j  = \displaystyle \frac{1}{n} \sum_{i=1}^n  \left\{  \ddot{\rho}(y_i, a_i^*) - \ddot{\rho}(y_i, \bx_i^T \widehat{\bxi}~)  \right\} x_{ij} \bx_i^T (\bxi^0 - \widehat{\bxi}~),
\]
where $a_i^*$ lies between $\bx_i^T\widehat{\bxi}$ and $\bx_i^T\bxi^0$, i.e. $ | a_i^* - \bx_i^T \widehat{\bxi}~ | \le | \bx_i^T ( \widehat{\bxi} - \bxi^0 ) | $. Then uniformly for all $j$,
\[
\begin{array}{rcl}
	| \Delta_j | & \le & \displaystyle \frac{1}{n} \sum_{i=1}^n  \vert \ddot{\rho}(y_i, a_i^*) - \ddot{\rho}(y_i, \bx_i^T \widehat{\bxi}) \vert \cdot |x_{ij}| \cdot  | \bx_i^T(\bxi^0 - \widehat{\bxi})| \\
	& \le & \displaystyle \frac{1}{n} \sum_{i=1}^n c_{Lip} | a_i^* - \bx_i^T \widehat{\bxi}   |  \cdot K \cdot | \bx_i^T(\bxi^0 - \widehat{\bxi})| \\
	& \le & \displaystyle c_{Lip}   K \cdot \frac{1}{n} \sum_{i=1}^n  | \bx_i^T(\bxi^0 - \widehat{\bxi})|^2 \\ 
	& = &  c_{Lip}   K \cdot  \OP(  s_0 \lambda^2) \\
	& = &  \OP(  s_0 \lambda^2),
\end{array}
\]
where the last equality holds by Lemma \ref{lemma1}. Since $\|  \bTheta_{\bxi^0}\| = \mathcal{O}(1)$ and $\| \hTheta - \bTheta_{\bxi^0} \| = \oP(1)$, it follows that $\| \hTheta \| = \OP(1)$, and 
\[
\begin{array}{rcl}
	{n}^{1/2} \| \hTheta \bDelta \|_2 &  \le & {n}^{1/2} \| \hTheta \| \cdot \| \bDelta \|_{2} \\
	& \le & {n}^{1/2} \OP(1) \cdot  {p}^{1/2} \|\bDelta\|_{\infty} \\
	& \le & \OP( ({np})^{1/2}  s_0 \lambda^2 ).
\end{array}
\]
By the assumption of $(np)^{1/2} s_0\lambda^2 = o(1)$ in Theorem 1, ${n}^{1/2} \|\hTheta\bDelta\|_2 = \oP(1)$. 
Applying Slutsky's Theorem and Lemma \ref{lemma3} gives the result.

{
	Part (ii) in Theorem 1 can be proved using Cram\'{e}r-Wold device. For any $\widetilde{\ba} \in \mR^{m}$, let $\balpha_n = \bA^T_n \widetilde{\ba}$ in Theorem 1(i), which would still hold when $\| \balpha_n \|_2 \le c'$ for some constant $c' >0$. In this case, $\| \balpha_n \|_2 = \| \bA_n^T \widetilde{\ba} \|_2 \le \| \bA_n^T \| \| \widetilde{\ba} \|_2 \le c_* \| \widetilde{\ba} \|_2$
	is upper bounded by a constant since $\widetilde{\ba}$ has a fixed dimension. Then, as $n \to \infty$,
	\[
	\displaystyle \frac{n^{1/2}\widetilde{\ba}^T \bA_n (\widehat{\bb} - \bxi^0)}{(\widetilde{\ba}^T \bA_n \widehat{\bTheta} \bA_n^T \widetilde{\ba} )^{1/2}} \overset{\mathcal{D}}{\to} N(0,1).
	\]
	The variance
	$
	| \widetilde{\ba}^T \bA_n \widehat{\bTheta} \bA_n^T \widetilde{\ba} - \widetilde{\ba}^T \bA_n \Thetabeta \bA_n^T \widetilde{\ba} | \le \| \widehat{\bTheta} - \Thetabeta \| \| \bA_n^T \widetilde{\ba} \|_2^2 = \oP(1). 
	$ Hence, by Slutsky's Theorem, 
	\[
	n^{1/2}\widetilde{\ba}^T \bA_n (\widehat{\bb} - \bxi^0) \overset{\mathcal{D}}{\to} N(0, \widetilde{\ba}^T \bF \widetilde{\ba} ).
	\]
}

\section{Additional Simulations}

\subsection{Simulation studies: large $n$, diverging $p$}

We examined the scenario with smaller sample sizes, where we simulated $n=500$ observations with $p = 20, 100, 200, 300$ covariates in logistic regression models. The rest of the settings were identical to those with $n=1000$ in the main text. Figures \ref{fig:logit_n500_rho7_indep} -- \ref{fig:logit_n500_rho7_cs} display the results from three types of covariance structures, including the identity matrix, the autoregressive structure of order 1 or AR(1) with correlation 0.7, and the compound symmtry structure with correlation 0.7, respectively.  Figure \ref{fig:logit_n500_rho7_cs} shows that with $n=500$, $p=300$ and the compound symmetry structure, neither of the de-biased lasso methods worked well, which is not surprising given the relatively small sample size and highly correlated covariates.

We also varied the correlation $\rho=0.2$ in the covariance matrix $\bSigma_x$ for the autoregressive and compound symmetry structures to reflect the presence of less correlated covariates; see Figure~\ref{fig:logit_n500_rho2_ar1} and Figure~\ref{fig:logit_n500_rho2_cs}, respectively. These results are close to the independent covariate case. To summarize, our proposed refined de-biased lasso approach, in most cases, can provide the best bias correction and honest confidence intervals.

In Section 4, additional simulation results have been shown to demonstrate that generally $\mu_n=0$ leads to the best performance empirically in Eq.~(5). In the presented logistic regression setting, we simulate $n=500$ observations and $p=40, 100, 200, 300, 400$ covariates for 200 times. Covariates follow a multivariate Gaussian distribution with mean zero and AR(1) covariance matrix ($\rho=0.7$). Only two coefficients are non-zero (1 and 0.5) and the rest are noises. We pre-specify a sequence of values in $[0,1]$ for the tuning parameter $\mu_n$ in Eq. (5), equally spaced in log scale. The de-biased lasso estimator based on Eq. (5) is referred to by ``tuning". 

Figure \ref{fig:tuning_large2} (already shown in Section 4 of the main article) and Figure \ref{fig:tuning_small} show the simulation results for the coefficients $\xi_j^0=1$ and $\xi_j^0=0.5$ respectively, where the three columns correspond to average estimation bias, coverage probability for its 95\% confidence interval, and the ratio between its average model-based standard error and empirical standard error, over 200 replications. Since our main focus is good bias correction and honest confidence interval coverage, we find that over a very wide range of number of covariates, $\mu_n=0$ performs the best empirically.

\begin{figure}
	\centering
	\includegraphics[width=\textwidth]{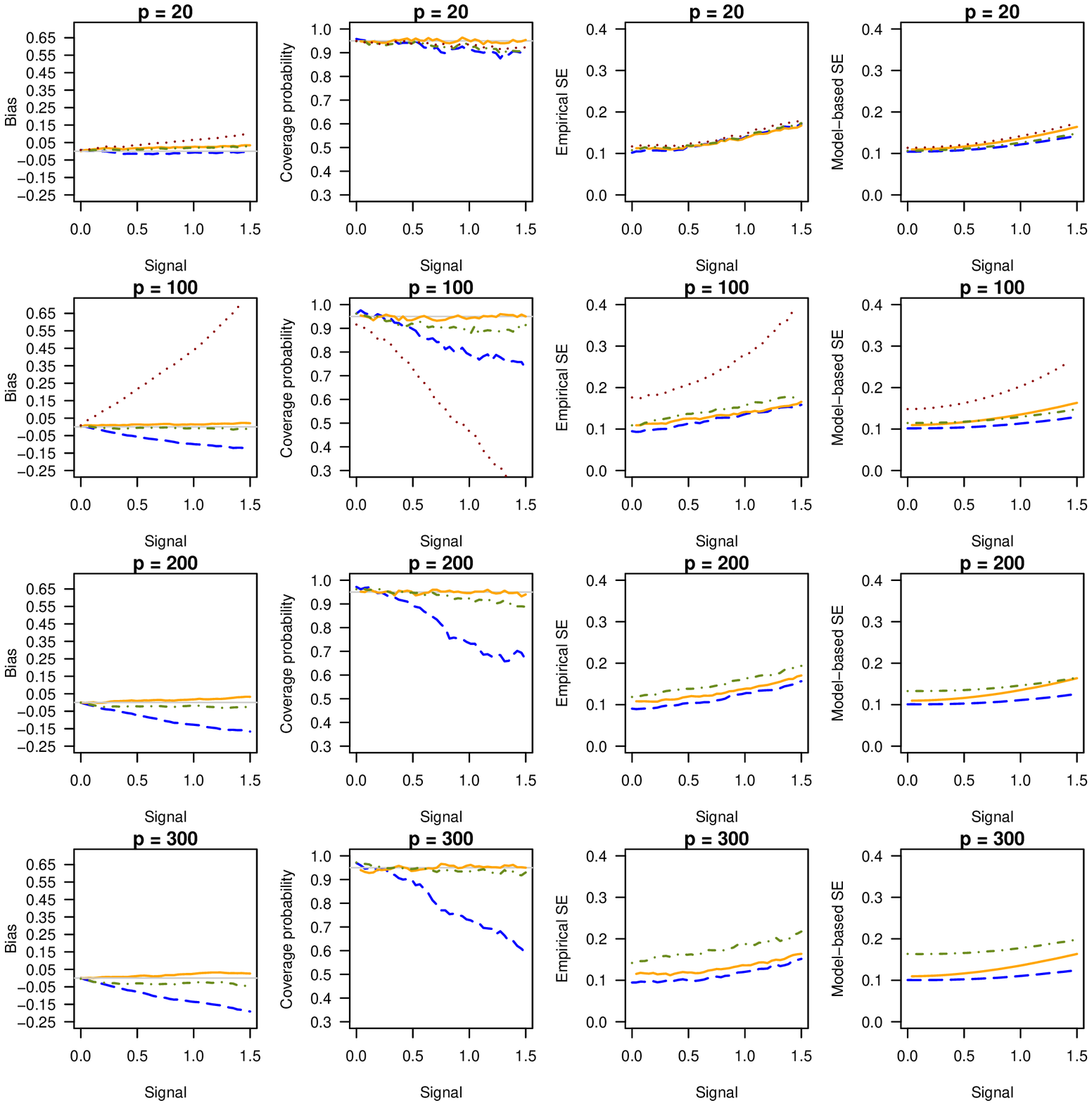}
	\caption{Simulation results: Bias, coverage probability, empirical standard error, and model-based standard error for $\beta_1^0$ in a logistic regression. Covariates are simulated from $N_{p} (0_{p}, \bI)$ before being truncated at $\pm 6$. The sample size is $n=500$ and the number of covariates $p = 20, 100, 200, 300$. The oracle estimator, that is the maximum likelihood estimator under the true model, is plotted as a reference in orange solid lines. The methods in comparisons include our proposed refined de-biased lasso in olive dot-dash lines, the original de-biased lasso by \citet{van2014asymptotically} in blue dashed lines, and the maximum likelihood estimation in red dotted lines.}
	\label{fig:logit_n500_rho7_indep}
\end{figure}

\begin{figure}
	\centering
	\includegraphics[width=\textwidth]{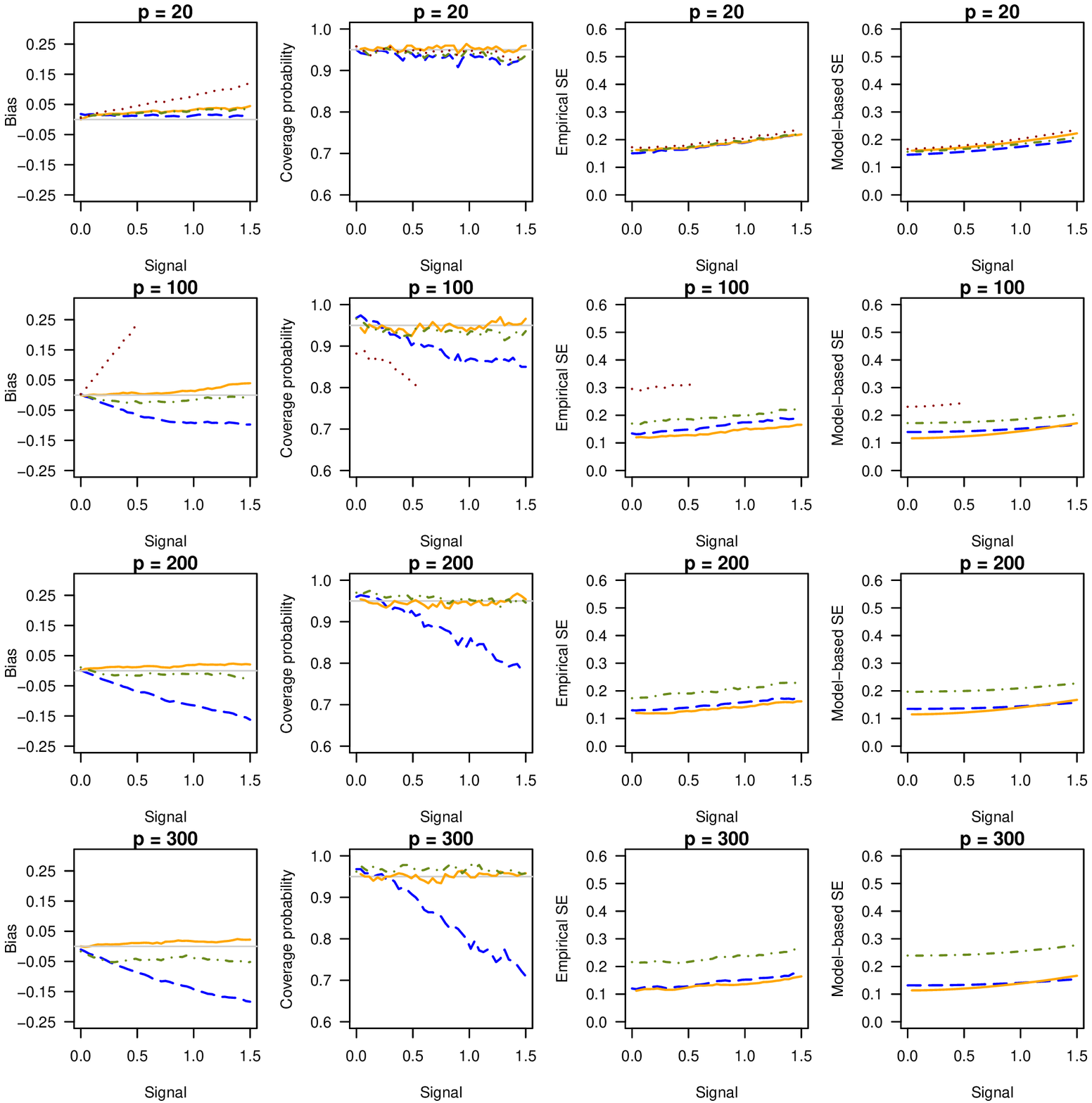}
\caption{Simulation results: Bias, coverage probability, empirical standard error, and model-based standard error for $\beta_1^0$ in a logistic regression. Covariates are simulated from $N_{p} (0_{p}, \bSigma_x)$ before being truncated at $\pm 6$, where $\Sigma_x$ has an autoregressive covariance structure of order 1 with $\rho=0.7$. The sample size is $n=500$ and the number of covariates $p = 20, 100, 200, 300$. The oracle estimator, that is the maximum likelihood estimator under the true model, is plotted as a reference in orange solid lines. The methods in comparisons include our proposed refined de-biased lasso in olive dot-dash lines, the original de-biased lasso by \citet{van2014asymptotically} in blue dashed lines, and the maximum likelihood estimation in red dotted lines.}
\label{fig:logit_n500_rho7_ar1}
\end{figure}

\begin{figure}
\centering
\includegraphics[width=\textwidth]{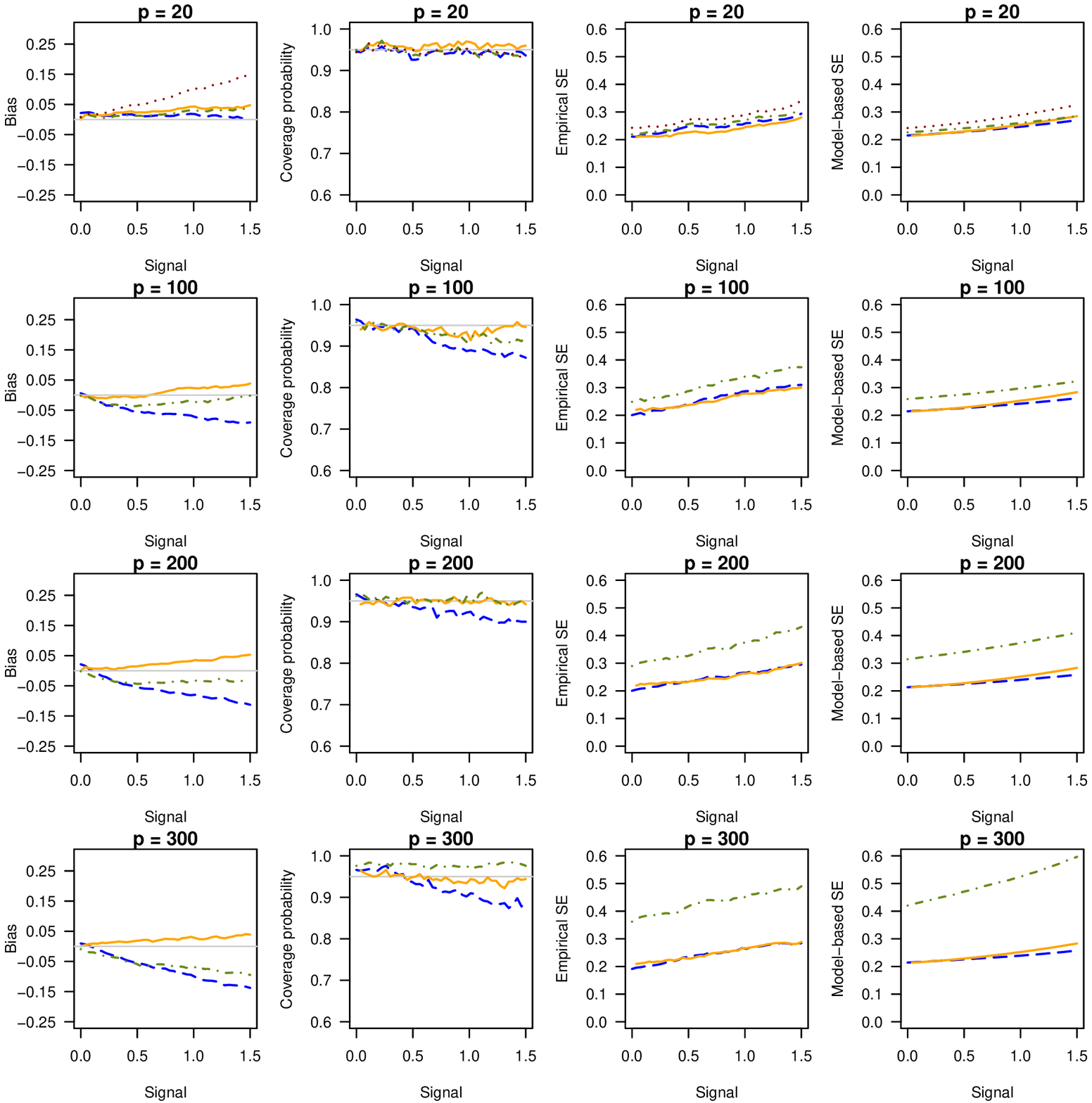}
\caption{Simulation results: Bias, coverage probability, empirical standard error, and model-based standard error for $\beta_1^0$ in a logistic regression.  Covariates are simulated from $N_{p} (0_{p}, \bSigma_x)$ before being truncated at $\pm 6$, where $\Sigma_x$ has a compound symmetry structure with $\rho=0.7$.  The sample size is $n=500$ and the number of covariates $p = 20, 100, 200, 300$. The oracle estimator, that is the maximum likelihood estimator under the true model, is plotted as a reference in orange solid lines. The methods in comparisons include our proposed refined de-biased lasso in olive dot-dash lines, the original de-biased lasso by \citet{van2014asymptotically} in blue dashed lines, and the maximum likelihood estimation in red dotted lines.}
\label{fig:logit_n500_rho7_cs}
\end{figure}

\begin{figure}
\centering
\includegraphics[width=\textwidth]{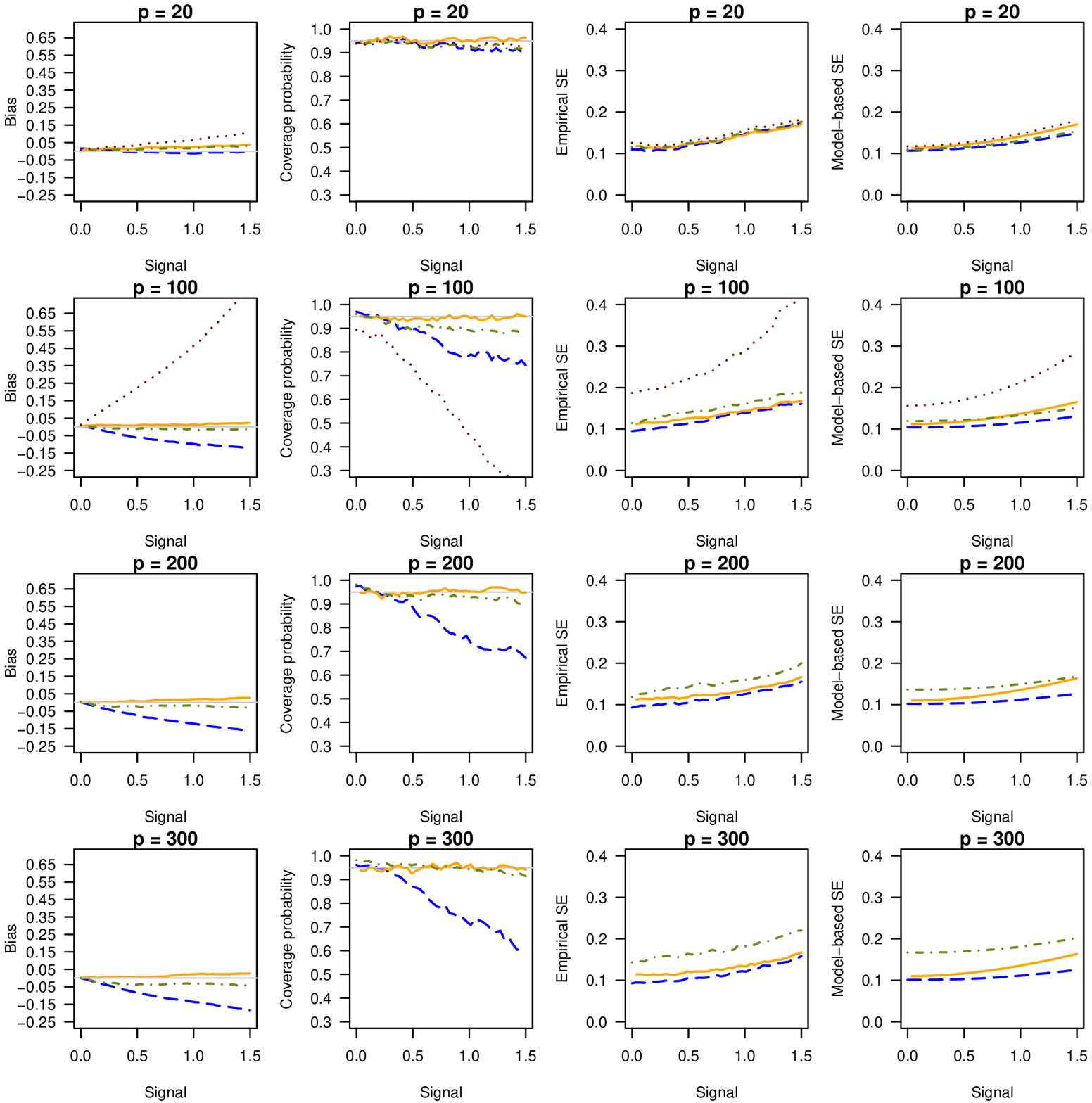}
\caption{Simulation results: Bias, coverage probability, empirical standard error, and model-based standard error for $\beta_1^0$ in a logistic regression. Covariates are simulated from $N_{p} (0_{p}, \bSigma_x)$ before being truncated at $\pm 6$, where $\Sigma_x$ has an autoregressive covariance structure of order 1 with $\rho=0.2$. The sample size is $n=500$ and the number of covariates $p = 20, 100, 200, 300$. The oracle estimator, that is the maximum likelihood estimator under the true model, is plotted as a reference in orange solid lines. The methods in comparisons include our proposed refined de-biased lasso in olive dot-dash lines, the original de-biased lasso by \citet{van2014asymptotically} in blue dashed lines, and the maximum likelihood estimation in red dotted lines.}
\label{fig:logit_n500_rho2_ar1}
\end{figure}

\begin{figure}
\centering
\includegraphics[width=\textwidth]{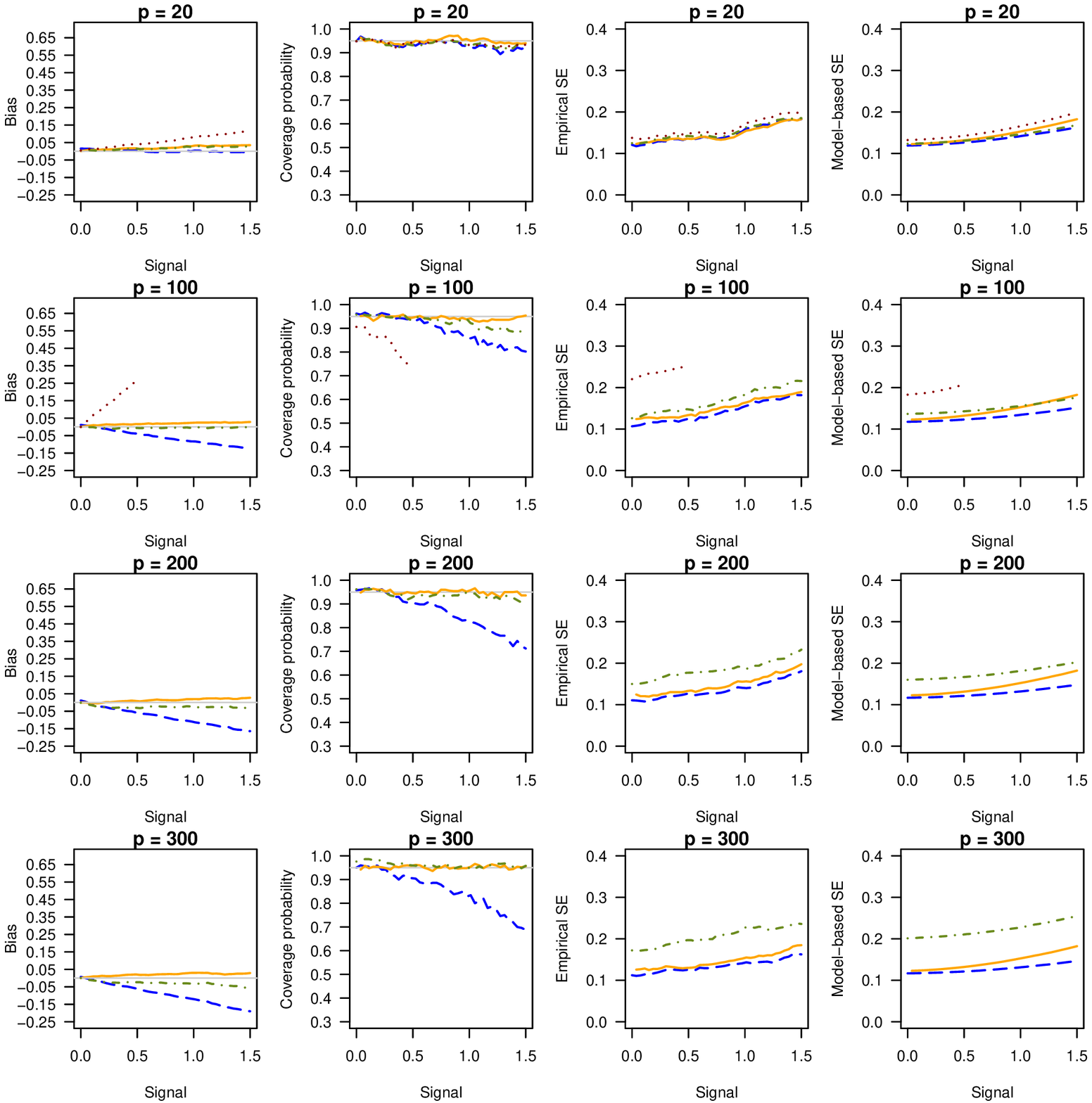}
\caption{Simulation results: Bias, coverage probability, empirical standard error, and model-based standard error for $\beta_1^0$ in a logistic regression.  Covariates are simulated from $N_{p} (0_{p}, \bSigma_x)$ before being truncated at $\pm 6$, where $\Sigma_x$ has a compound symmetry structure with $\rho=0.2$.  The sample size is $n=500$ and the number of covariates $p = 20, 100, 200, 300$. The oracle estimator, that is the maximum likelihood estimator under the true model, is plotted as a reference in orange solid lines. The methods in comparisons include our proposed refined de-biased lasso in olive dot-dash lines, the original de-biased lasso by \citet{van2014asymptotically} in blue dashed lines, and the maximum likelihood estimation in red dotted lines.}
\label{fig:logit_n500_rho2_cs}
\end{figure}

\begin{figure}
	\centering	
	\includegraphics[width=0.9\textwidth]{verify_tuning_large_signal.eps}
	\caption{Simulation results that verify the selection of the tuning parameter $\mu_n=0$ in Eq. (5) for $\xi_j^0=1$.}
	\label{fig:tuning_large2}
\end{figure}

\begin{figure}
\centering
\includegraphics[width=0.9\textwidth]{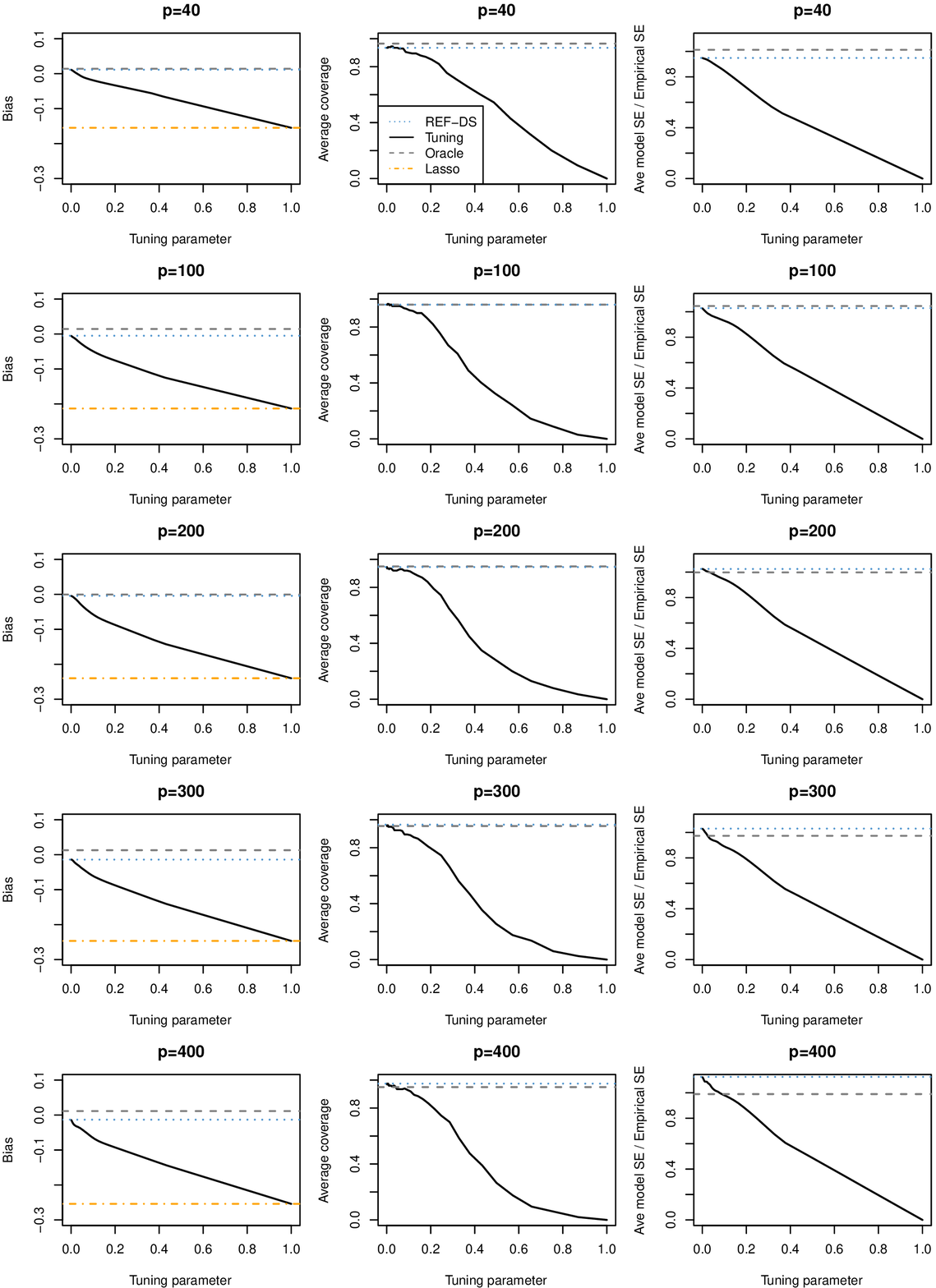}
\caption{Simulation results that verify the selection of the tuning parameter $\mu_n=0$ in Eq. (5) for $\xi_j^0=0.5$.}
\label{fig:tuning_small}
\end{figure}

\subsection{Simulation studies: large $p$, small $n$}

We also present simulation studies that feature logistic regression models in the ``large $p$, small $n$" setting, with $n=300$ observations and $p=500$ covariates. For simplicity, covariates are simulated from $N_{p}({0}, \bSigma_x)$, where $\bSigma_{x, ij} = 0.7^{|i-j|}$, and truncated at $\pm 6$. In the true coefficient vector $\bbeta^0$, the intercept $\beta^0_0 = 0$ and $\beta^0_1$ varies from 0 to 1.5 with 40 equally spaced increments. To examine the impacts of different true model sizes, we arbitrarily choose $\bar{s}_0 = $2, 4 or 10 additional coefficients from the rest in $\bbeta^0$, and fix them at 1 throughout the simulation. At each value of $\beta_1^0$, a total of 500 simulated datasets are generated. We focus on the de-biased estimates and inference for $\beta_1^0$ using the method of  \citet{van2014asymptotically}. 

Figure \ref{fig:sim_largep}, with the true model size increasing from the top to the bottom, shows that the de-biased lasso estimate for $\beta_1^0$ has a bias which almost linearly increases with the true  size of $\beta_1^0$. This undermines the credibility of the consequent confidence intervals. Meanwhile, the model-based variance overestimates the true variance for smaller signals and underestimates it for larger signals in the two  models with smaller model sizes, as shown by the top two rows in Figure \ref{fig:sim_largep}. This partially explains the over- and under-coverage for smaller and larger signals, respectively. Due to penalized estimation in node-wise lasso, the  variance of the original de-biased lasso estimator is even smaller than the oracle maximum likelihood estimator obtained as if the true model were known; see the bottom two rows in Fig. \ref{fig:sim_largep}. The empirical coverage probability decreases to about 50\% as the signal $\beta_1^0$ goes to 1.5, and when the true model size reaches 5;  see the middle row in Figure \ref{fig:sim_largep}. The bias correction is sensitive to the true model size, which becomes worse for larger true models. We have also conducted simulations by changing the covariance structure of covariates to be  independent or compound symmetry with correlation coefficient 0.7 and variance 1, and  have obtained similar results.

\begin{figure}
	\centering
	\includegraphics[width=0.7\textwidth]{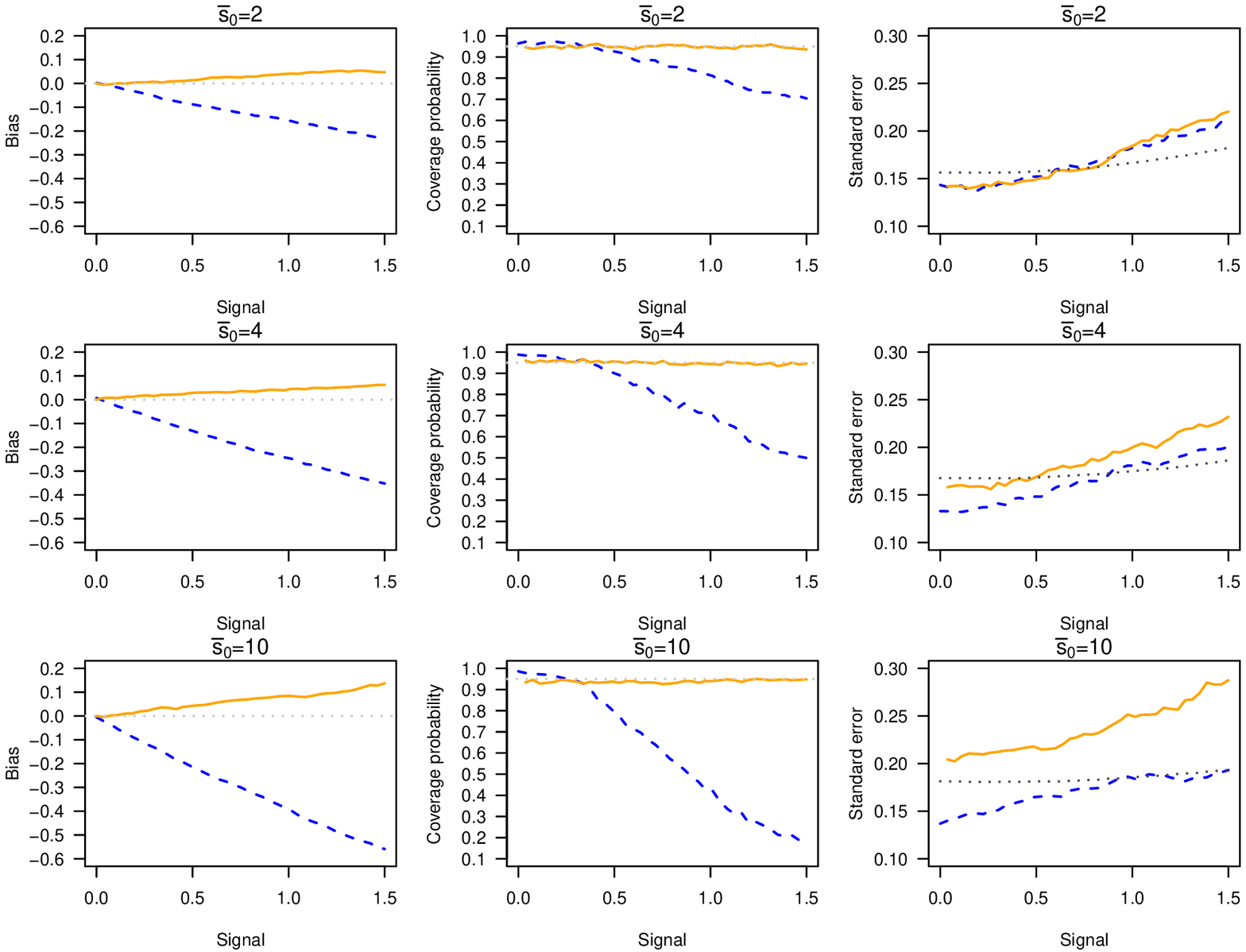}
	\caption{Simulation results of a logistic regression with sample size $n=300$ and $p=500$ covariates. Covariates are simulated from $N_{p} (0_{p}, \bSigma_x)$ before being truncated at $\pm 6$, where $\Sigma_x$ has an autoregressive covariance structure of order 1 with $\rho=0.7$.  The left column  presents estimation bias, the middle column presents empirical coverage probability, and the right column presents standard error, both model-based and empirical, of the estimated $\beta_1^0$. Horizontal panels correspond to models with 2, 4 and 10 additional signals fixed at 1 from the top to the bottom, respectively. In the left and middle columns, blue dashed lines represent the original de-biased lasso approach by \citet{van2014asymptotically}, and orange solid  lines represent the oracle estimator. In the right column, blue dashed lines and black dotted lines represent the empirical standard error and the model-based standard error from the method of \citet{van2014asymptotically}, respectively, and orange solid lines for the empirical standard error of the oracle estimator.}
	\label{fig:sim_largep}
\end{figure}

\section{Demographics of the Boston Lung Cancer Survivor Cohort}

Table \ref{tab:blcs_pop} summarizes the demographics of the 1,374 individuals studied in the main text, stratified by their smoking status. 

\begin{table}
	\centering
	\caption{Characteristics of the individuals in the analytical data set of the Boston Lung Cancer Survivor Cohort} 	\label{tab:blcs_pop}
	\small
	\begin{tabular}{lccc}
		\hline
		\multirow{2}{*}{Information} & {Overall} & {Among smokers} & {Among non-smokers} \\
		&  Count (\%) / Mean (SD\footnote{Standard deviation}) & Count (\%) / Mean (SD) & Count (\%) / Mean (SD) \\  \hline
		Total & 1374 (100\%) & 1077 (100\%) & 297 (100\%) \\
		Lung cancer &  &  &  \\
		$\quad$ Yes & 651 (47.4\%) & 595 (55.2\%) & 56 (18.9\%) \\
		$\quad$ No & 723  (52.6\%) & 482  (44.8\%) & 241   (81.1\%) \\
		Education & & & \\
		$\quad$ No high school & 153    (11.1\%) & 139   (12.9\%)  &  14    (4.7\%) \\
		$\quad$ High school graduate & 374 (27.2\%) & 309 (28.7\%) &  65 (21.9\%) \\
		$\quad$ At least 1-2 years of college & 847 (61.7\%) & 629 (58.4\%) & 218 (73.4\%) \\
		Gender & & & \\
		$\quad$ Female & 845  (61.5\%) & 644  (59.8\%) &  201   (67.7\%) \\
		$\quad$ Male & 529 (38.5\%) & 433 (40.2\%)  & 96 (32.3\%) \\
		Age & 60.0 (10.6) & 60.7 (10.2) & 57.7 (11.7) \\
		\hline
	\end{tabular}
\end{table}

\section{Discussion on the difference between sparsity assumptions in our work and \citet{van2014asymptotically}}

We first notice there are two kinds of sparsity parameters: one for the sparsity of
regression coefficients (denoted by $s_0$), and the other for 
the sparsity of the inverse of the information matrix, $\Thetabeta$ (denoted by
$s_j$, the number of non-zero elements in the $j$th row of $\Thetabeta$).
The following clarifies the extent to which the assumptions of  our Theorem 1  differs from those of \citet{van2014asymptotically}.
For the model sparsity $s_0$, our assumption $s_0 \log(p) (p/n)^{1/2} \to 0$ is indeed more stringent than  $s_0 \log(p) / \sqrt{n} \to 0$  required by \citet{van2014asymptotically}, whereas for the sparsity of the inverse information matrix, \citet{van2014asymptotically} assumed $s_j = o(\sqrt{n/\log(p)})$ for all $j$ and we do not make any assumptions on $s_j$ directly.  
%where $s_j$ is the number of non-zero elements in the $j$th row of $\Thetabeta$. As discussed in our response to your major concern (ii), we note that there is no guarantee that the inverse information matrix $\Thetabeta$ would be sparse even if the inverse covariance matrix of the covariates $\{ E (x_i x_i^T) \}^{-1}$ is. 
A related condition set by us is $p^2/n \to 0$, which is weaker than $s_j = o(\sqrt{n/\log(p)})$ by a logarithmic factor if $s_j \asymp p$. 

Below we elaborate on how these  sparsity differences lead to different results  obtained by  our manuscript and  \citet{van2014asymptotically}, which indeed have different inferential objectives, and the rationale why our  assumptions fit our
inferential objectives. %We hope the discussion may clarify that our assumptions are not only meant to eliminate a logarithmic term in the sparsity of the inverse covariance at the cost of requiring more sparsity on $\xi^0$. 

First, these two works differ in  inferential objectives. We aim to infer  any linear combinations of the regression parameter, i.e. $\alpha_n^T \xi^0$, where the only constraint on $\alpha_n$ is $\| \alpha_n \|_2 = 1$ (in fact, bounded $\| \alpha_n \|_2$ would suffice). Thus, we have to control the behavior of the  $(p+1)\times (p+1)$ matrix $\{ \widehat{\Theta} - \Thetabeta \}$. In contrast, \citet{van2014asymptotically}  inferred individual   components in $\xi^0$ one at a time, making it sufficient to control the rates of $\{ \widehat{\Theta}_j - \Theta_{\xi^0,j} \}$ (here the subscript $j$ indicates the $j$th row of a matrix) for one row at a time, and the node-wise lasso  provides such required rates. 
% The following is distracting and is not directly addressing the questions. Challenging derivations are not the focus.

%In addition, obtaining $\widehat{\Theta}$ by inverting  $\widehat{\Sigma}_{\widehat{\xi}}$ makes it more challenging to directly study the rate of $\| \widehat{\Theta}_j - \Theta_{\xi^0,j} \|_q ~ (q=1,2)$ than the node-wise lasso;  one needs to first derive $\| \widehat{\Theta} - \Thetabeta \|$ (spectral norm) and then resorts to inequalities such as 
%\[
%\begin{array}{rcl}
%\| \widehat{\Theta}_j - \Theta_{\xi^0,j} \|_2 & \le & \| \widehat{\Theta} - \Thetabeta \|, \\
%   \max_j \| \widehat{\Theta}_j - \Theta_{\xi^0,j} \|_1 & \le & \sqrt{p} \max_j \| \widehat{\Theta}_j - \Theta_{\xi^0,j} \|_2 \le \sqrt{p} \| \widehat{\Theta} - \Thetabeta \|  \\
%\end{array}
%\]
%to derive the rates on $\| \widehat{\Theta}_j - \Theta_{\xi^0,j} \|_1$ and $\| \widehat{\Theta}_j - \Theta_{\xi^0,j} \|_2$.

Second, besides the essential assumptions that both papers require (our Assumptions 1--4), \citet{van2014asymptotically} has another important assumption that we do not need to assume, that is, $\| \mathbf{X}_{\beta^0,-j} \gamma^0_{\beta^0,j} \|_{\infty} = \mathcal{O}(1)$ (see their Theorem 3.3 (iv)), which results in $\| \mathbf{X}\widehat{\Theta}_j^T \|_{\infty} = \OP(K)$ in their condition (C5). In our notation, this assumption would be equivalent to the boundedness on $\| \Thetabeta x_i \|_{\infty}$ and would result in $\| \widehat{\Theta} x_i \|_{\infty}$ being bounded in probability. However, we have elected {not to directly make such assumptions on the inverse of the informative matrix and its estimate} as they {may be closely related to the sparsity requirement of $\Thetabeta$ under Assumption 1, and may not hold or be verifiable in GLM settings.} %\lu{For example, under the bounded covariates assumption for $\bx_i$, a sufficient condition for bounded $\| \Thetabeta x_i \|_{\infty}$ is $\| \Theta_{\xi^0, j} \|_1 = \mathcal{O}(1)$ uniformly for all $j$, which is in effect a sparsity condition on the rows of $\Thetabeta$ using an $\ell_1$ norm, and as with the $\ell_0$ sparsity condition, is hard to verify and may not hold.}
%\lucmt{I saw that Yi deleted this originally thinking there was a logic issue. Although it is just a sufficient condition, not equivalent, I just thought mentioning this makes an intuitive example of why assuming bounded $\| \Thetabeta x_i \|_{\infty}$ may not be proper sometimes. Is there some way to modify this to be free of the logic issue?}
%We have noted this at the end of the paragraph below Assumptions 1--5: ``Unlike \citet{van2014asymptotically}, we have avoided an assumption on the boundedness of $\| \Thetabeta x_i \|_{\infty}$. In fact, under Assumption 1 a sufficient condition for a bounded $\| \Thetabeta x_i \|_{\infty}$ is $\| \Thetabeta \|_1 = \mathcal{O}(1)$, which is in effect a sparsity condition on the rows of $\Thetabeta$ using an $\ell_1$ norm." - deleted as there is a logic issue.  

Finally, we clarify that our specified order assumption on $s_0$ with respect to $n$ and $p$ is to ensure  $| n^{1/2} \alpha_n^T \widehat{\Theta} \Delta | = \oP(1)$ in the proof of Theorem 1 (please see Pages 5--6 in Web Appendix A), which is 
for inference on any linear combinations of regression coefficients. However, if we had aimed for 
%\lu{a stronger condition on $\| \widehat{\Theta} x_i \|_{\infty} = \OP(1)$}
%\lucmt{this assumption must be present to ensure the derivation below} and
a weaker result of inferring an individual coefficient only as in \citet{van2014asymptotically},
we would have let $\alpha_n = e_j$ (a $p$-dimensional vector with the $j$th element being 1 and  all the other elements being
zero) corresponding to drawing inference on the effect of the $j$th covariate, and also with a condition of $\| \widehat{\Theta} x_i \|_{\infty} = \OP(1)$ as in \citet{van2014asymptotically}, 
we would have had
\[
\begin{array}{rl}
	| \sqrt{n} \widehat{\Theta}_j \Delta | & = | \sqrt{n} \displaystyle \frac{1}{n} \sum_{i=1}^n \left\{ \ddot{\rho}(y_i, a_i^*) - \ddot{\rho}(y_i, x_i^T \widehat{\xi}) \right\} \widehat{\Theta}_j x_i x_i^T (\xi^0 - \widehat{\xi})  | \\
	& \le \sqrt{n} \displaystyle \frac{1}{n} \sum_{i=1}^n |\ddot{\rho}(y_i, a_i^*) - \ddot{\rho}(y_i, x_i^T \widehat{\xi})| \cdot |\widehat{\Theta}_j x_i| \cdot |x_i^T (\xi^0 - \widehat{\xi})| \\
	& \le \sqrt{n} \displaystyle \frac{1}{n} \sum_{i=1}^n c_{Lip} |x_i^T (\xi^0 - \widehat{\xi})| \cdot \OP(1) \cdot |x_i^T (\xi^0 - \widehat{\xi})| \\
	& = \sqrt{n} \OP(1) \displaystyle \frac{1}{n} \sum_{i=1}^n |x_i^T (\xi^0 - \widehat{\xi})|^2 \\
	& = \OP(\sqrt{n} s_0 \lambda^2).
\end{array}
\]
% where the second inequality holds due to the discussion in the last paragraph.
Therefore, to infer $\xi_j^0$ alone, we would have reached the  same assumption that  $s_0 \log(p) / \sqrt{n} \to 0$ as in \citet{van2014asymptotically} with $\lambda \asymp \sqrt{\log(p)/n}$.

In summary,  our work may be meritorious by  providing readers with these explicit rates for guaranteeing proper inferences when directly inverting the information matrix.

%\bibliographystyle{apalike}
%\bibliography{paper-ref}

\vspace*{-8pt}

\end{document}